\documentclass[11pt]{article}

\usepackage{fullpage}

\usepackage{nicefrac}
\usepackage{xcolor}
\usepackage{amsmath,amssymb,amsthm,mathtools}
\usepackage{enumitem}
\usepackage{microtype}
\usepackage{soul}
\usepackage{thm-restate}
\usepackage[ruled,vlined,linesnumbered]{algorithm2e}

\usepackage[hidelinks]{hyperref}
\usepackage[capitalize]{cleveref}
\usepackage{tikz,adjustbox}
\usetikzlibrary{arrows.meta,calc,fit,backgrounds}
\usepackage{amsthm}

\newtheorem{theorem}{Theorem}
\newtheorem{lemma}{Lemma}
\newtheorem{corollary}{Corollary}
\newtheorem{contract}{Contract}
\newtheorem{remark}{Remark}
\newtheorem{problem}{Problem}
\newtheorem{definition}{Definition}

\crefname{problem}{Problem}{Problems}
\crefname{contract}{Contract}{Contracts}
\crefname{line}{Line}{Lines}
\Crefname{line}{Line}{Lines}
\makeatletter
\renewcommand{\nllabel}[1]{%
  \begingroup
  \def\@currentcounter{AlgoLine}%
  \label{#1}%
  \endgroup
}
\makeatother

\newcommand{\norm}[1]{\left\lVert #1\right\rVert_\infty}
\newcommand{\clip}{\operatorname{clip}}

\newcommand{\Solve}{\mathsf{Solve}}
\newcommand{\Label}{\mathsf{Label}}
\newcommand{\Amplify}{\mathsf{Amplify}}
\newcommand{\ThinLabel}{\mathsf{ThinLabel}}
\newcommand{\Oracle}{\mathsf{Oracle}}

\newcommand{\eps}{\varepsilon}

\renewcommand{\L}{\Phi}
\def\[#1\]{\begin{align*}#1\end{align*}}

\NewDocumentCommand{\numberthis}{om}{%
  \IfNoValueTF{#1}{%
    \refstepcounter{equation}\tag{\theequation}%
  }{%
    \tag{#1}%
  }%
  \label{#2}%
}

\newcommand{\repeatthmname}{}
\newtheorem*{repeatthminner}{\protect\repeatthmname}
\newenvironment{repeatthm}[1]{%
  \renewcommand{\repeatthmname}{Theorem~\ref{#1}}%
  \begin{repeatthminner}%
}{%
  \end{repeatthminner}%
}

\usepackage{xcolor}
\definecolor{myOrange}{RGB}{235, 105, 92}
\definecolor{myMaroon}{RGB}{128, 0, 0}
\definecolor{maroon}{rgb}{0.5,0.0,0.0}
\definecolor{heavymaroon}{rgb}{0.3,0.0,0.0}

\definecolor{mygreen}{RGB}{160, 200, 140}

\usepackage{hyperref}
\hypersetup{ 
colorlinks=true, 
linkcolor=myOrange, 
citecolor=myMaroon, filecolor=cyan, 
urlcolor=mygreen
}

\setlist{nosep,leftmargin=*}
\title{A Quasipolynomial Algorithm for $\ell_\infty$ Contractions}
\author{
    Ashkan Soleymani\\
    MIT\\
    \texttt{ashkanso@mit.edu}\\
    \and
    Gabriele Farina\\
    MIT\\
    \texttt{gfarina@mit.edu}\\
}
\date{}

\begin{document}
\maketitle

\begin{abstract}
    We give an algorithm for computing an $\eps$-approximate fixed point of any $d$-dimensional nonexpansion with respect to the $\ell_\infty$ norm in time $\log(1/\eps)^{O(\log^2 d)}$. At its core, our method uses a boosting argument to geometrically shrink a search region. Among other consequences, our algorithm implies that simple stochastic games can be solved in quasipolynomial time.
\end{abstract}

\section{Introduction}
\label{sec:introduction}

Let $f:[0,1]^d\to[0,1]^d$ be an $\ell_\infty$-contraction. Thus, for some $\gamma\in(0,1]$,
\[
    \norm{f(x)-f(y)}
    \leq (1-\gamma)\norm{x-y}
    \qquad\text{for all }x,y\in[0,1]^d.
\]
By Banach's fixed-point theorem, $f$ has a unique fixed point
$x^*\in[0,1]^d$, satisfying $f(x^*)=x^*$.
We distinguish two notions of approximation.
For $\eps>0$, a point $x\in[0,1]^d$ is an
\emph{$\eps$-approximate fixed point} if
$\norm{f(x)-x}\leq\eps$, and an
\emph{$\eps$-near fixed point} if
$\norm{x-x^*}\leq\eps$. Contractivity ensures that every $\eps$-near fixed point is
a $2\eps$-approximate fixed point, while every
$\eps$-approximate fixed point lies within distance
$\eps/\gamma$ of the unique fixed point $x^*$.

The classical Picard iteration~\cite{banach1922sur} starts from an
arbitrary point $x^{(0)}\in[0,1]^d$ and repeatedly applies $f$,
setting $x^{(t+1)}=f(x^{(t)})$.
Although the iterates converge to $x^*$, the standard convergence
bound gives $O(\gamma^{-1}\log(1/\eps))$ evaluations to obtain an
$\eps$-approximate fixed point.
This bound can be exponential in the input size when the contraction
gap $\gamma$ is exponentially small, as can occur in Shapley's
discounted stochastic games~\cite{shapley1953stochastic}.

A central open problem at the intersection of optimization and computer science is whether an $\eps$-approximate fixed point
can be computed in time polynomial in $d$, $\log(1/\eps)$, and $\log(1/\gamma)$.
A polynomial-time solution would settle the long-standing complexity
of \emph{Condon's simple stochastic games} and, through known reductions,
yield polynomial-time algorithms for stochastic parity and
mean-payoff games~\cite{condon1992complexity,andersson2009complexity}.
It would also allow polynomial-time approximation of the values
of \emph{Shapley's stochastic games}, in which players choose their
actions simultaneously~\cite{shapley1953stochastic}.
Shapley's model further laid the foundations for Markov games,
a standard framework for multi-agent reinforcement
learning~\cite{littman1994markov}. More broadly, fixed-point problems play an important role in
optimization and distributed computation~\cite{bertsekas1983distributed,bertsekas1989parallel},
discounted dynamic programming and reinforcement
learning~\cite{blackwell1965discounted,puterman1994markov,sutton2018reinforcement},
differential and integral equations~\cite{banach1922sur,coddington1955theory},
and formal verification, particularly modal $\mu$-calculus
model checking~\cite{kozen1983results,emerson2001model}.

Along this avenue of investigation, Chen, Li, and Yannakakis showed that an $\eps$-approximate fixed
point of a nonexpansion can be found with $O(d^2\log(1/\eps))$ queries~\cite{chen2024computing},
later improved to $O(d\log(1/\eps))$ in the journal
version~\cite{chen2025computing}.
Despite their polynomial query complexity, these algorithms
incur exponential computational overhead in selecting the query points.
Feodorov and Haslebacher~\cite{feodorov2026faster}
subsequently bounded both query complexity and total running time by
$
    \log(1/\eps)^{O(\sqrt d\log d)}.
$
An account of further related work is deferred to \cref{sec:related-work}.

\paragraph{Result and consequences.}
We give the first quasipolynomial-time algorithm for computing
near fixed points of $\ell_\infty$-contractions.
In particular, we present a deterministic algorithm
$\Solve(f,\eps)$ with the following property.

\begin{theorem}\label{thm:main}
Fix $d\ge2$, a contraction gap $\gamma\in(0,1]$, and a desired nearness tolerance $0<\eps\le1/4$. For a $(1-\gamma)$-contraction $f:[0,1]^d\to[0,1]^d$, the total number of arithmetic operations and exact evaluations of $f$ performed by $\Solve(f,\eps)$ to return an $\eps$-\emph{near} fixed point of $f$ is upper bounded by
\[
    \numberthis{eq:solve-complexity}
\left(\log\frac1{\gamma\eps}\right)^{C \log^2 d},
\]
for a universal constant $C$.
\end{theorem}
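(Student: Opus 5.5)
We plan to build $\Solve$ from a recursive ``shrinking'' primitive. The primitive takes an axis-aligned box $B\subseteq[0,1]^d$ of side length $r$ that is promised to contain $x^*$, and returns a box of side length $r/2$ that still contains $x^*$. Applying it $O(\log(1/\eps))$ times starting from $B=[0,1]^d$ produces a box of side $\le\eps$, and any point of that box is an $\eps$-near fixed point. So it suffices to implement one halving step in time $L^{O(\log^2 d)}$, where $L=\log\frac{1}{\gamma\eps}$.

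The information we use comes from a single query. For a query point $x$ we have $\norm{f(x)-x^*}\le(1-\gamma)\norm{x-x^*}$. Consequently, in every coordinate $i$ with $|f(x)_i-x_i|>2\norm{x-x^*}$, the sign of $f(x)_i-x_i$ tells us on which side of $x_i$ the coordinate $x^*_i$ lies. This gives a ``label'', or partial halfspace cut, that is certified whenever $x$ is far from $x^*$ in that coordinate relative to its overall distance. The sign information relative to the current box is certified because $x^*\in B$.

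The halving step proceeds by dimension reduction, which we intend to organise as a boosting argument.

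\textbf{Weak step.} We first show that with $L^{O(\log d)}$ queries we can either cut $B$ by a constant factor in at least a $1/2$-fraction of the coordinates, or find a point $x$ with $\norm{f(x)-x}$ small relative to $r$. To do this, we fix half of the coordinates (say $S$) at a grid value. We then recursively solve the $(d/2)$-dimensional contraction obtained by composing $f$ with clipping onto the remaining coordinates; the restricted map is still an $\ell_\infty$ nonexpansion and is a contraction in a suitable sense, to the precision we need. The recursion returns a point that is approximately fixed on the free coordinates. For that point, the $S$-coordinates of $f(x)-x$ either certify directions for $x^*_S$ or certify that $x_S$ is near $x^*_S$. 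A binary search over the grid value in each coordinate of $S$, done in parallel by monotone relabelling, costs $O(L)$ levels per recursion. This gives the recurrence $T(d)\le O(L)\,T(d/2)\cdot\mathrm{poly}(d)$, hence $T(d)=L^{O(\log d)}$ for the weak step.

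\textbf{Boosting.} A weak step only localises a constant fraction of coordinates, or only up to a constant accuracy factor. We therefore iterate $O(\log d)$ rounds, each time recursing on the coordinates not yet shrunk, until all $d$ coordinates have been halved. Composing the two levels of recursion gives $T(d)\le L^{O(\log d)}\,T(d/2)$, which solves to $L^{O(\log^2 d)}$. Multiplying by the $O(\log(1/\eps))$ outer halving rounds and the arithmetic per query yields the bound \eqref{eq:solve-complexity}. The $\gamma$ in $L$ enters through the tolerances: every certified cut has slack proportional to $\gamma r$, so working on a grid of resolution $\gamma\eps$ keeps all cuts valid.

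The main obstacle is the weak step. We must guarantee that the approximate fixed point of the restricted subproblem carries enough signal to shrink a \emph{constant fraction} of coordinates by a constant factor, rather than only one coordinate. Error accumulates across recursion levels, since the subproblem's fixed point is only approximate. To control this, we plan to carry a tolerance that shrinks geometrically with depth; there are only $O(\log d)$ levels, so this costs only an extra $L^{O(\log d)}$ factor. We would first try a potential-function argument: sum over coordinates the log of the box width, and show that each weak round decreases this potential by $\Omega(d)$.
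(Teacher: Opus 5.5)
There is a genuine gap, and it sits in the step you flag as the main obstacle. The weak step does not work for a general (non-monotone) $\ell_\infty$-contraction, and nothing else in the plan supplies the missing mechanism.

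The single-query certificate is already wrong as stated. By the triangle inequality, $|f(x)_i-x_i|\le(2-\gamma)\norm{x-x^*}$ for every $i$, so $|f(x)_i-x_i|>2\norm{x-x^*}$ never holds. The correct condition is $|f(x)_i-x_i|>(1-\gamma)\norm{x-x^*}$, but it involves the unknown $\norm{x-x^*}$. Plugging in the natural computable bound $\norm{x-x^*}\le\norm{x-f(x)}/\gamma$ makes it unsatisfiable once $\gamma<1/2$.

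The same obstruction breaks the weak step. After freezing $x_S$ and solving the restriction exactly, you do get $\norm{x-x^*}=\norm{x_S-x_S^*}$. But only the unknown coordinate of $S$ attaining this maximum is guaranteed to satisfy $\operatorname{sign}(f_i(x)-x_i)=\operatorname{sign}(x_i^*-x_i)$. For every $j\in S$ with $|x_j-x_j^*|<(1-\gamma)\norm{x-x^*}$, which you cannot tell apart from the others, $f_j(x)$ may land on either side of $x_j$. So the $S$-coordinates of $f(x)-x$ certify no directions. A parallel coordinatewise binary search ``by monotone relabelling'' presupposes an order structure, as for monotone maps, that is not assumed here.

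More fundamentally, an evaluation turns the certificate $\norm{x-x^*}\le R$ into only $\norm{f(x)-x^*}\le(1-\gamma)R$. Naive aggregation therefore needs on the order of $1/\gamma$ queries per constant-factor cut. The whole difficulty is to make constant-factor progress at cost polylogarithmic in $1/\gamma$. Your ``boosting'' layer does not address this: it reruns the weak step on not-yet-shrunk coordinates, which is not even well defined, since freezing coordinates moves the fixed point.

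The paper's proof rests on two ingredients absent from your plan.

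\emph{Thin labeling.} This certifies information relative to the extremal face slacks $\sigma_{i,-}=x_i^*-L_i$ and $\sigma_{i,+}=U_i-x_i^*$, rather than per-coordinate signs. Split the coordinates by the median half-side $t$ into narrow and wide. Solve the two half-dimensional restrictions with the complementary coordinates frozen at the box center $q$, and clip the solutions into a query point $x$. The restriction bound $\norm{y^*-x_N^*}\le(1-\gamma)\norm{q_{-N}-x^*_{-N}}$, combined with the clipping, guarantees $\sigma_{\min}\le t-\norm{x-x^*}$ and $\sigma_{\max}\ge t+\norm{x-x^*}$ whenever $x$ is not already an $\eta$-approximate fixed point. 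Thresholding $f(x)$ at $t$ then correctly labels every face whose slack is within $\gamma\eta/8$ of $\sigma_{\min}$ or $\sigma_{\max}$ (with $\eta=\gamma\eps$).

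\emph{Amplification.} An AdaBoost-style procedure shifts faces and tracks exponential-weight potential bounds $B_\pm$. It converts $\rho$-accurate labelings into $\kappa$-accurate ones with $\log(\kappa/\rho)^{O(\log d)}$ oracle calls. With $\kappa/\rho=O(1/(\gamma^2\eps))$, this costs $\log(1/(\gamma\eps))^{O(\log d)}$ per call to $\Label$. Each oracle call recurses into dimension at most $d/2$, and the $\lceil\log_2 d\rceil$ levels of halving give the $\log(1/(\gamma\eps))^{O(\log^2 d)}$ bound.
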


Our main theorem also has direct and well-understood implications for computing an $\eps$-\emph{approximate} fixed point of \emph{nonexpansions} (i.e., $1$-Lipschitz functions $f$) \cite{chen2024computing}. Specifically, given a nonexpansion $f$ and a desired approximation $0<\eps<1$, the map $g\coloneqq(1-\eps/2)f$ is a
contraction with gap $\eps/2$, and every
$(\eps/4)$-near fixed point of $g$ is an
$\eps$-approximate fixed point of $f$. 
\begin{corollary}
    For any $d\ge2$, $0<\eps\le1/2$, and
    $\ell_\infty$-nonexpansive map
    $f:[0,1]^d\to[0,1]^d$, there exists a deterministic algorithm for computing an $\eps$-approximate fixed point of $f$ with complexity
    $\log(1/\eps)^{C \log^2 d}$, for a universal constant $C$.
\end{corollary}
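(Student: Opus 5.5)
The plan is to derive the corollary directly from \cref{thm:main}, applied to the rescaled map $g\coloneqq(1-\eps/2)f$ with contraction gap $\gamma=\eps/2$ and nearness tolerance $\eps/4$. The algorithm simply runs $\Solve(g,\eps/4)$. Each evaluation of $g$ is simulated by one exact evaluation of $f$ followed by $d$ multiplications. We then return the output point $x$ unchanged.

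\textbf{Step 1: $g$ satisfies the hypotheses of \cref{thm:main}.} Since $f(x)\in[0,1]^d$ and $0<1-\eps/2\le1$, each coordinate of $g(x)$ lies in $[0,1]$, so $g:[0,1]^d\to[0,1]^d$. For any $x,y$,
\[
\norm{g(x)-g(y)}=(1-\eps/2)\norm{f(x)-f(y)}\le(1-\eps/2)\norm{x-y},
\]
so $g$ is a $(1-\gamma)$-contraction with $\gamma=\eps/2\in(0,1]$. The tolerance satisfies $\eps/4\le1/8\le1/4$, as \cref{thm:main} requires. Hence $g$ has a unique fixed point $x^*$, and $\Solve(g,\eps/4)$ returns some $x$ with $\norm{x-x^*}\le\eps/4$.

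\textbf{Step 2: $x$ is an $\eps$-approximate fixed point of $f$.} Using the triangle inequality and $g(x^*)=x^*$,
\[
\norm{f(x)-x}\le\norm{f(x)-g(x)}+\norm{g(x)-g(x^*)}+\norm{x^*-x}.
\]
The first term is $\tfrac{\eps}{2}\norm{f(x)}\le\eps/2$, because $f(x)\in[0,1]^d$. The second and third terms are each at most $\norm{x-x^*}\le\eps/4$, the second because $g$ is nonexpansive. The total is therefore at most $\eps$.

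\textbf{Step 3: complexity.} By \eqref{eq:solve-complexity}, $\Solve(g,\eps/4)$ performs at most $\left(\log\frac{8}{\eps^2}\right)^{C\log^2 d}$ operations and evaluations of $g$. Simulating $g$ multiplies this count by at most $d+1$. Because $\eps\le1/2$, we have $\log 8\le 3\log(1/\eps)$, and hence $\log(8/\eps^2)\le5\log(1/\eps)$.

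\textbf{Main obstacle: absorbing the constants.} The only delicate point is absorbing the factor $5^{C\log^2 d}$ and the overhead $d+1$ into the bound $\log(1/\eps)^{C'\log^2 d}$. This works whenever $\log(1/\eps)$ is bounded below by a constant larger than $1$. For example, if $\log(1/\eps)\ge 5$, then
\[
(d+1)\,5^{C\log^2 d}\le\log(1/\eps)^{(C+2)\log^2 d},
\]
using $d\ge2$ so that $\log^2 d$ is bounded below. Near $\eps=1/2$, however, $\log(1/\eps)$ can be close to (or, in base $2$, equal to) $1$, and the literal bound degenerates. I would handle this as is standard: read $\log$ as $\max\{\log,\,2\}$, equivalently state the bound up to an $O(\cdot)$ or replace $\eps$ by $\min\{\eps,\eps_0\}$ for a fixed small constant $\eps_0$. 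With this reading, the bound follows with $C'=C+O(1)$.
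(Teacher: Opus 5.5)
Your proof is correct and follows the same argument as the paper: rescale to $g=(1-\eps/2)f$, which is a contraction with gap $\eps/2$, run $\Solve(g,\eps/4)$ via \cref{thm:main}, and observe that an $\eps/4$-near fixed point of $g$ is an $\eps$-approximate fixed point of $f$, with $\log(8/\eps^2)\le 5\log(1/\eps)$. Your caveat about $\eps$ near $1/2$ points to a real looseness in the literal bound that the paper does not address; also, your absorption step actually gives exponent constant $2C+2$ rather than $C+O(1)$, which is harmless.
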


Compared with the previous bound of Feodorov and
Haslebacher~\cite{feodorov2026faster}, our result reduces
the exponent from $O(\sqrt d\log d)$ to $O(\log^2 d)$,
bringing the total running time from subexponential
to quasipolynomial.
In particular, when $\log(1/\eps)=d^{O(1)}$, the
running-time bound improves from
$\exp(O(\sqrt d\log^2 d))$ to $\exp(O(\log^3 d))$.

\begin{figure}[!t]
\centering
\begin{adjustbox}{max width=\linewidth}
\definecolor{AccentMaroon}{HTML}{7A1F2B}
\definecolor{AccentFill}{HTML}{EADADA}
\definecolor{SoftGray}{HTML}{F0F0F0}
\definecolor{MutedText}{HTML}{555555}
\begin{tikzpicture}[
    x=1cm,y=1cm,
    font=\footnotesize,
    base/.style={
      draw=black, fill=white, line width=.73pt,
      rounded corners=2.4pt, align=center,
      inner xsep=5pt, inner ysep=3pt
    },
    gray/.style={base,fill=SoftGray},
    accent/.style={base, draw=AccentMaroon, fill=AccentFill, line width=1pt},
    spine/.style={
      text width=4.55cm, minimum height=1.05cm,
      inner xsep=5pt, inner ysep=3pt
    },
    side/.style={text width=3.05cm, minimum height=.96cm},
    decision/.style={
      text width=5.0cm, minimum height=2.05cm,
      inner xsep=9pt, inner ysep=7pt, align=left
    },
    group/.style={draw=black!60, dashed, rounded corners=3pt, line width=.7pt},
    to/.style={
      -{Latex[length=2.35mm,width=1.6mm]},
      draw=black, line width=.85pt,
      shorten <=1.0pt, shorten >=1.0pt
    },
    both/.style={
      {Latex[length=2.15mm,width=1.50mm]}-{Latex[length=2.15mm,width=1.50mm]},
      draw=black, line width=1.00pt,
      shorten <=0.8pt, shorten >=0.8pt
    },
    bothCompact/.style={
      {Latex[length=1.65mm,width=1.50mm]}-{Latex[length=1.65mm,width=1.50mm]},
      draw=black, line width=1.00pt,
      shorten <=0.6pt, shorten >=0.6pt
    }
]
\node[accent,spine] (fixed) at (0,13.54)
  {Approximate Fixed Points of\\$\ell_\infty$-Nonexpansive Maps};
\node[gray,spine] (monotone) at (0,11.98)
  {Approximate Fixed Points of\\Monotone $\ell_\infty$-Contractions};
\node[gray,spine] (shapley) at (0,10.42)
  {Discounted Shapley\\Stochastic Games};
\node[gray,spine] (discounted) at (0,8.86)
  {Turn-Based Discounted\\Stochastic Games};
\node[accent,spine] (ssg) at (0,7.30)
  {Simple Stochastic Games\\(SSGs)};

\draw[to] (monotone.north) -- (fixed.south);
\draw[to] (shapley.north) -- (monotone.south);
\draw[to] (discounted.north) -- (shapley.south);
\draw[bothCompact] (ssg.north) -- (discounted.south);

\node[base,side] (stochparity) at (-4.92,7.30)
  {Stochastic\\Parity Games};
\node[base,side] (stochmean) at (4.92,7.30)
  {Stochastic\\Mean-Payoff Games};
\draw[both] (stochparity.east) -- (ssg.west);
\draw[both] (ssg.east) -- (stochmean.west);

\begin{scope}[on background layer]
\node[group,fit=(discounted)(ssg)(stochparity)(stochmean),
      inner xsep=11pt,inner ysep=6pt] (core) {};
\end{scope}
\node[anchor=west,font=\scriptsize\itshape,text=MutedText,
      fill=white,inner xsep=3pt,inner ysep=0pt]
  at ($(core.north west)+(0.20,-0.01)$)
  {Polynomial-Time Turing Equivalence};

\node[base,side] (parity) at (-4.92,5.83) {Parity Games};
\node[base,side] (mean) at (4.92,5.83)
  {Deterministic\\Mean-Payoff Games};
\draw[to] (parity.east) -- (mean.west);
\draw[to] (mean.north) -- (stochmean.south);

\node[base,side] (mu) at (-4.92,4.11)
  {Modal $\mu$-Calculus\\Model Checking};
\draw[both] (mu.north) -- (parity.south);

\node[base,decision] (eqdet) at (4.92,3.25) {
  \textbf{Equivalent Decision Problems}\\[4pt]
  $\bullet$\enspace One-Dimensional Energy Games\\[2pt]
  $\bullet$\enspace Tropical Linear Feasibility\\[2pt]
  $\bullet$\enspace Tropical Linear Regression\\[2pt]
  $\bullet$\enspace Max-Atom Satisfiability
};
\draw[both] (eqdet.north) -- (mean.south);

\node[font=\scriptsize,text=MutedText,align=left,anchor=west]
  at (-6.7,2.1)
  {$\mathbf{A\longrightarrow B:\ A\leq_T^{p}B}$
   \qquad
   $\mathbf{A\longleftrightarrow B:\ A\equiv_T^{p}B}$};
\end{tikzpicture}
\end{adjustbox}

\caption{Polynomial-time reduction hierarchy connecting approximate
$\ell_\infty$ fixed-point computation to stochastic games and related
decision problems.
The dashed box groups turn-based, perfect-information stochastic
games that are polynomial-time Turing equivalent~\cite{andersson2009complexity}.
The upper reductions arise from the fixed-point formulation of
discounted stochastic games, where sufficiently accurate approximations
also yield exact solutions to SSGs~\cite{shapley1953stochastic,batziou2025monotone}. The lower reductions connect parity games, modal
$\mu$-calculus model checking, and decision problems equivalent
to deterministic mean-payoff games~\cite{jurdzinski1998deciding,
emerson2001model,brim2011faster,akian2012tropical,
akian2023tropical,atserias2009mean}. The red boxes highlight our main result and its application to SSGs,
the gray boxes indicate intermediate problems, and the white boxes
show further consequences.}
\label{fig:reduction-hierarchy-uniform}
\end{figure}

\paragraph{Applications.}
Our result yields quasipolynomial-time algorithms for several
stochastic games and related decision problems, as summarized in
\Cref{fig:reduction-hierarchy-uniform}.
In particular, Condon's simple stochastic games (SSGs) reduce
to computing sufficiently accurate fixed points of discounted
value operators~\cite{condon1992complexity,feodorov2026faster}.
The reduction uses polynomial-bit parameters and efficiently
evaluable operators, so our algorithm solves SSGs exactly in
quasipolynomial time.
By the polynomial-time Turing equivalences of Andersson and
Miltersen~\cite{andersson2009complexity}, the same bound extends
to turn-based stochastic parity, mean-payoff, discounted-payoff,
and terminal-payoff games.
For Shapley's general discounted stochastic games, which allow
simultaneous actions, our algorithm also yields quasipolynomial-time
approximation of their value vectors~\cite{shapley1953stochastic}.

The reductions in \Cref{fig:reduction-hierarchy-uniform}
extend these consequences to deterministic mean-payoff games,
parity games, and modal $\mu$-calculus model
checking~\cite{jurdzinski1998deciding,emerson2001model}.
The corresponding decision problems for one-dimensional energy
games, tropical linear feasibility, tropical linear regression,
and max-atom satisfiability are polynomial-time equivalent to
deterministic mean-payoff games~\cite{brim2011faster,
akian2012tropical,akian2023tropical,atserias2009mean}.
Although quasipolynomial-time algorithms are already known
for parity games~\cite{calude2017deciding}, our approach
provides a common route to these problems through
nonexpansive fixed-point computation.

\section{Further Related Work}
\label{sec:related-work}

Concurrent with our work, OpenAI released a broad collection of
preprints spanning several areas of mathematics, including
quasipolynomial-time algorithms for deterministic mean-payoff games,
stochastic mean-payoff games, and mean-payoff parity
games~\cite{openai2026meanpayoff,openai2026stochasticmeanpayoff,
openai2026meanpayoffparity}.
A randomized algorithm for deterministic mean-payoff games was also
included~\cite{openai2026randomizedmeanpayoff}.
These algorithms exploit the structure of the underlying games,
in particular the monotonicity of their value operators. Our algorithm, in contrast, applies to arbitrary
$\ell_\infty$-nonexpansive maps without assuming monotonicity
or access to an explicit game representation.

Smoothed analysis has also yielded polynomial-time guarantees
for turn-based deterministic discounted and mean-payoff games.
Loff and Skomra~\cite{loff2024smoothed} first established such
guarantees for policy iteration on ergodic game graphs. More recently, Asadi, Chatterjee, and
Luo~\cite{asadi2027smoothed} and Loff and
Skomra~\cite{loff2027smoothed} independently obtained
smoothed polynomial-time algorithms for deterministic discounted
and mean-payoff games on arbitrary graphs, removing the
ergodicity assumption.

The study of $\ell_\infty$-nonexpansive maps has a long history
in fixed dimension.
Shellman and Sikorski first obtained an
$O(\log(1/\eps))$-time algorithm in two
dimensions~\cite{shellman2002twodimensional},
and later extended their approach to arbitrary fixed dimension
with running time
$O((\log(1/\eps))^d)$~\cite{shellman2003recursive}.
More recently, Chen, Li, and Yannakakis achieved a quadratic
speedup, reducing the exponent from $d$ to
$\lceil d/2\rceil$~\cite{chen2026quadratic}.
While these algorithms have favorable dependence on the accuracy
for fixed $d$, their running-time bounds remain exponential
in the dimension.
In a different direction, Chen, Li, and Yannakakis established
that $O(d^2\log(1/\eps))$ function evaluations suffice in
arbitrary dimension~\cite{chen2024computing}, later improved to
$O(d\log(1/\eps))$~\cite{chen2025computing}.
The computation required to determine the query points,
however, remained exponential.
Feodorov and Haslebacher subsequently made progress on
this computational bottleneck, obtaining subexponential bounds
on both query complexity and total running
time~\cite{feodorov2026faster}.

A different line of work assumes \emph{monotonicity} with respect to
the coordinatewise order.
For monotone $\ell_\infty$-contractions, Batziou, Fearnley,
Gordon, Mehta, and Savani~\cite{batziou2025monotone}
obtained an $O(\log(1/\eps))$-query algorithm in three dimensions
and extended it to arbitrary dimension through a decomposition
theorem, using
$2^{O(d)}(\log(1/\eps))^{\lceil d/3\rceil}$ queries.
Both algorithms require only polynomial computation between
evaluations, yielding corresponding total running-time bounds
up to polynomial factors.
They also established membership in $\mathsf{UEOPL}$,
with consequences for Shapley's stochastic games.

Without the contraction assumption, Tarski's fixed-point
theorem still guarantees the existence of fixed points for
monotone self-maps of complete lattices~\cite{tarski1955lattice}.
Etessami, Papadimitriou, Rubinstein, and
Yannakakis~\cite{etessami2020tarski} showed that solving
Condon's simple stochastic games and approximating the values
of Shapley's stochastic games reduce to finding Tarski fixed
points, and established connections to equilibria in
supermodular games.
For monotone maps on the finite grid $[n]^d$, where
$[n]=\{1,\ldots,n\}$, Fearnley, Pálvölgyi, and Savani~\cite{fearnley2022tarski}
improved the exponent of $\log n$ in the query bound
from $d$~\cite{dang2011tarski} to
$2\lceil d/3\rceil$. More recently, Chen, Gao, Li, Rubinstein, and
Yannakakis~\cite{chen2026tarski} obtained an
$O(5^d(\log n)^{\lceil\log d\rceil})$-query algorithm,
together with a randomized query lower bound of
$(\log n)^{\Omega(\log d)}$, even for maps with
a unique fixed point.

A parallel line of research studies approximate fixed points of
contractions under other $\ell_p$ norms.
For $\ell_1$-contractions, Fearnley, Gordon, Mehta, and
Savani~\cite{fearnley2020unique} obtained an
$O((\log(1/\eps))^d)$-time algorithm in fixed dimension.
Chen, Li, and Yannakakis~\cite{chen2026quadratic}
recently reduced the exponent from $d$ to $\lceil d/2\rceil$.
Haslebacher~\cite{haslebacher2025arrival} also established
a reduction from generalized \textsc{Arrival}, a deterministic
graph traversal problem, to approximate fixed-point computation
for $\ell_1$-contractions.

For the Euclidean norm ($\ell_2$), Huang, Khachiyan, and
Sikorski~\cite{huang1999approximating} developed an
ellipsoid-based algorithm using $O(d\log(1/\eps))$
function evaluations.
Building on the polynomial query bounds for
$\ell_\infty$-contractions~\cite{chen2024computing},
Haslebacher, Lill, Schnider, and
Weber~\cite{haslebacher2025queryefficient} established
polynomial query complexity for contractions under
every $\ell_p$ norm.
More recently, Daskalakis, Farina, and
Zhang~\cite{daskalakis2026evenp} obtained a
$\operatorname{poly}(d,p,\log(1/\eps))$-time algorithm
for $\ell_p$-nonexpansive maps when $p$ is an even integer,
achieving polynomial total running time in this setting.

\section{Preliminaries and Notation}
\label{sec:preliminaries}

While the function $f$ has domain $[0,1]^d$, our algorithm might sometimes produce search boxes off by a constant amount from the domain. For that, throughout the paper we extend $f$ to $\mathbb{R}^d$ by setting
$f(x)=f(\clip_{[0,1]^d}(x))$ for $x\notin[0,1]^d$.
Since clipping is $\ell_\infty$-nonexpansive, this extension
preserves the contraction gap $\gamma$ and the unique
fixed point $x^*$. For any $N\subseteq[d]$, let $x_N$ denote the subvector containing only the coordinates in $N$,
and let $x_{-N}$ denote its restriction to the complementary
coordinates. Similarly, $f_N(x)$ denotes the restriction of
$f(x)$ to the coordinates in $N$.

We mention a standard lemma that enables converting guarantees between approximate and near fixed points.

\begin{lemma}\label{lem:triv bounds}
For every $x\in\mathbb{R}^d$,
$
    \gamma\norm{x-x^*}
    \leq \norm{x-f(x)}
    \leq (2-\gamma)\norm{x-x^*}.
$
\end{lemma}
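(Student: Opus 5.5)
The plan is to use only the triangle inequality in $\ell_\infty$, the contraction property of the extended map $f$, and the fact that $x^*=f(x^*)$. The extension $f(x)=f(\clip_{[0,1]^d}(x))$ is still a $(1-\gamma)$-contraction on all of $\mathbb{R}^d$ with the same fixed point $x^*$, as noted above. So both inequalities hold for arbitrary $x\in\mathbb{R}^d$ and not only on the cube.

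For the upper bound, I would insert $x^*$ and $f(x^*)$ into the residual:
\[
\norm{x-f(x)}\le\norm{x-x^*}+\norm{f(x^*)-f(x)}\le\norm{x-x^*}+(1-\gamma)\norm{x-x^*}=(2-\gamma)\norm{x-x^*}.
\]

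For the lower bound, I would go the other way, bounding the distance to $x^*$ by the residual plus the contraction term:
\[
\norm{x-x^*}\le\norm{x-f(x)}+\norm{f(x)-f(x^*)}\le\norm{x-f(x)}+(1-\gamma)\norm{x-x^*}.
\]
Subtracting $(1-\gamma)\norm{x-x^*}$ from both sides gives $\gamma\norm{x-x^*}\le\norm{x-f(x)}$. This subtraction is legitimate because $\norm{x-x^*}$ is finite.

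There is no real obstacle here. The only point needing care is that the extension to $\mathbb{R}^d$ preserves both the contraction gap and the fixed point, and this holds because clipping is nonexpansive and fixes $[0,1]^d$.
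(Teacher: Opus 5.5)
Your proof is correct and matches the paper's argument: the paper likewise writes $x-f(x)=(x-x^*)+(x^*-f(x))$ and combines the triangle inequality (in both directions) with the contraction bound $\norm{f(x)-x^*}\le(1-\gamma)\norm{x-x^*}$. Your remark that the clipped extension keeps the contraction gap and the fixed point is the right justification for allowing $x\in\mathbb{R}^d$.
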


\begin{proof}
Write $x - f(x) = (x - x^*) + (x^* - f(x))$ and apply the triangle inequality as well as the contraction property $\|f(x) - x^*\|_\infty \le (1-\gamma)\|x - x^*\|_\infty$. 
\end{proof}

\section{Algorithm Overview}
\label{sec:overview}

Our algorithm maintains a search cube $[L,U] = \prod_{i=1}^d [L_i,U_i]$ of side $U_i - L_i = D$, that contains the true fixed point $x^*$. At each iteration, the algorithm shrinks the length of each side of the cube by a factor $7/8$ by either increasing $L_i$ by $D/8$ or decreasing $U_i$ by $D/8$ along each coordinate $i \in [d]$, while maintaining the invariant that $x^*$ remains in the cube. Hence, within $O(\log(1/\eps))$ iterations, $x^*$ is contained within a box of size $\eps$. The midpoint $q \coloneqq (L + U)/2$ is therefore a $\eps/2$-near fixed point and hence a $\eps$-approximate fixed point by \Cref{lem:triv bounds}.

For every coordinate $i=1,\dots, d$, the algorithm identifies which of the two faces of the cube, at $L_i$ or $U_i$, is safe to push inward by $D/8$. (At least one of these two faces is at distance $\ge D/2$ from $x_i^*$, so at least one face can be safely moved inward by $D/8$). 

To decide which direction to push, we will construct a function $\Label(L, U; D/8, \eps)$ that either returns an $\eps$-near fixed point of $f$, or correctly identifies when $x_i^*$ is within distance $D/8$ from either $L_i$ or $U_i$. In particular, in the latter case $\Label(L, U; D/8, \eps)$ returns labels $\ell_{i,-}$ for each coordinate, with the property that
\[
\numberthis{eq:label main}
    x_i^* \in [L_i, L_i + D/8] \implies \ell_{i,-} = \mathsf{lo};\quad\qquad
    x_i^* \in [U_i - D/8, U_i] \implies \ell_{i,-} = \mathsf{hi}.
\]
Importantly, the label is arbitrary when $x_i^*$ is within $(L_i + D/8, U_i - D/8)$; luckily, in that case it does not matter whether we push $L_i$ or $U_i$ inward. Therefore, it is always safe to update $L_i \gets L_i + D/8$ when $\ell_{i,-} = \mathsf{hi}$ and $U_i \gets U_i - D/8$ when $\ell_{i,-} = \mathsf{lo}$.

Crucially, we will be able to construct such $\Label$ subroutine to run in quasi-polynomial time. We summarize the resulting pseudocode in \cref{alg:solve}.

\begin{algorithm}[htp]
\caption{$\Solve(f,\eps)$ for $d\ge2$}
\label{alg:solve}
\KwIn{A $(1-\gamma)$-contraction $f:[0,1]^d\to[0,1]^d$, and desired tolerance $\eps > 0$.}
\KwOut{An $\eps$-near fixed point of $f$.}
$D\gets1$\;
$L_i\gets0$, $U_i\gets1$ for every $i=1,\dots, d$\;
\While{$D>\eps$}{
 \If(\hfill\tcp*[h]{\cref{algo:label}}){$\Label(L,U; D/8, \eps)$ returns an $\eps$-near fixed point $x$}{
 \KwRet $x$
 }
 \Else{receive labels $\ell_{i,-}$ from the call}
 \ForEach{$i=1,\dots,d$}{
  \lIf{$\ell_{i,-}=\mathsf{lo}$}{$U_i\gets U_i-D/8$}
  \lIf{$\ell_{i,-}=\mathsf{hi}$}{$L_i\gets L_i+D/8$}
 }
 $D\gets7D/8$\;
}
\KwRet $(L+U)/2$\;
\end{algorithm}

For completeness, we define the base cases ($d < 2$) of $\Solve$ separately.
In dimension zero, $\Solve$ returns the empty vector.
In dimension one, $\Solve(f,\eps)$ uses bisection on $f(x)-x$.
Indeed, $f(0)\geq 0$, $f(1)\leq 1$, and $f(x)-x$ is strictly
decreasing by contractivity.
Thus, bisection maintains an interval containing $x^*$,
and returning its midpoint once the interval has length at most
$2\eps$ yields an $\eps$-near fixed point using
$O(1+\log(1/\eps))$ evaluations of $f$.

\subsection{Labelings}

As laid out in \cref{sec:overview}, the key insight behind the algorithm is the fact that it is possible to correctly detect when each coordinate $x_i^*$ is sufficiently close to the extrema of a query box $[L, U]$. The bulk of the algorithm will be focused on how to detect such labels efficiently. Formally, we introduce the following central concept.

\begin{definition}[$\tau$-accurate labeling]\label{def:labeling}
    Let $[L, U] \subset \mathbb{R}^d$ be a box, possibly degenerate (i.e., we allow the possibility that $L_i > U_i$ for some $i$). For each coordinate $i$, we denote the distances from the extrema $L_i$ and $U_i$ to the $i$-th coordinate of the fixed point, $x_i^*$, as
    \[
        \sigma_{i,-} \coloneqq x_i^* - L_i, \qquad\text{and}\qquad \sigma_{i,+} \coloneqq U_i - x_i^*.
    \]
    We refer to these differences as the \emph{slacks} of the faces $L,U$, and refer to them generically with the symbol $\sigma_j$. Let $\sigma_{\max}$ and $\sigma_{\min}$ denote the maximum and minimum values of $\sigma_j$, taken over all $2d$ faces of the box.

    We call assignment of labels $\ell_{i,-},\ell_{i,+}\in\{\mathsf{lo},\mathsf{hi}\}$ to all faces $j \in \{1,\dots,d\}\times\{-,+\}$ \emph{$\tau$-accurate} if it satisfies
    \begin{equation}\label{eq:labeling}
        \sigma_j \ge \sigma_{\max} - \tau \implies \ell_{j}=\mathsf{hi}, \qquad \quad \sigma_j \le \sigma_{\min} + \tau \implies \ell_{j}=\mathsf{lo}.
    \end{equation}
\end{definition}

Some remarks are in order. 

\paragraph{From \Cref{def:labeling} to the guarantee \eqref{eq:label main}.}
First, the following remark shows that our definition of labeling
provides the guarantee required by $\Solve$.

\begin{remark}
We remark that this notion immediately yields the notion of labels described in \eqref{eq:label main}, by setting accuracy $\tau \coloneqq D/8$. Indeed, for any cube $[L,U]$ of side $D$ containing $x^*$, $\sigma_{\min} \ge 0$ and $\sigma_{\max} \le D$, and so any $D/8$-accurate labeling satisfies in particular
\[
    \sigma_{i,-} \le \frac{D}{8} \implies \ell_{i,-} = \mathsf{lo}, \qquad \sigma_{i,-} \ge \frac{7}{8}D \implies \ell_{i,-} = \mathsf{hi}.
\]
The first condition is precisely $x_i^* \in [L_i, L_i + D/8]$; the second condition is $x_i^* \in [L_i + 7D/8, U_i] = [U_i - D/8, U_i]$. Hence, any $D/8$-accurate labeling implies \eqref{eq:label main}.
\end{remark}

\paragraph{Non-existence of labelings.}
Second, we point out that a valid $\tau$-accurate labeling may not exist. As the following remark shows, in this case the center of the box is already a $\tau$-near fixed point.
\begin{remark}\label{rem:center}
    It is possible that a labeling as in \Cref{def:labeling} might not exist due to incompatible conditions in \eqref{eq:labeling}. Such inconsistency is possible only when
    $
        \sigma_{\max} - \tau \le \sigma_{\min} +  \tau,
    $
    that is $\sigma_{\max} - \sigma_{\min} \le 2\tau$. But then, this immediately certifies that the center $q \coloneqq (L+U)/2$ satisfies
    \[
        2 |x_i^* - q_i| = |\sigma_{i,-} - \sigma_{i,+}| \le \sigma_{\max} - \sigma_{\min} \le 2\tau \quad\qquad \forall i = 1,\dots, d.
    \]
    Hence, non-existence of a labeling immediately certifies that the center $q$ is a $\tau$-near fixed point.    
\end{remark}

\paragraph{Degenerate boxes.} Finally, we remark that allowing for degenerate boxes is always without loss of generality.

\begin{remark}\label{rem:degen}Although picturing $[L,U]$ as a nonempty box captures the intended application, internally our algorithm might end up querying a labeling for a \emph{degenerate} box $L,U$ (i.e., one that does not respect $L_i < U_i$ for all $i$). For the purposes of labelings, the extension of labelings to degenerate boxes is without loss of generality: every time $[L,U]$ is degenerate, one can find $c \in \mathbb{R}$ such that the repaired box $[L - c\mathbf{1}, U+c\mathbf{1}]$ is non-degenerate. Crucially, the labeling guarantees for such a repaired box are exactly the same guarantees as \cref{def:labeling}, since adding or subtracting $c$ from all entries of $L$ and $U$ shifts by the same quantity all $\sigma_j$ and thus also $\sigma_{\min}$ and $\sigma_{\max}$. To keep the exposition as simple as possible and avoid introducing that additional step, we opted to introduce \cref{def:labeling} in the more permissive form directly.
\end{remark}

\subsection{Overview of Labeling Construction}

In light of \cref{rem:center}, our goal will be to construct a generic routine $\Label(L,U; \tau, \eps)$ whose output is either a $\tau$-accurate labeling or a $\eps$-near fixed point.

We will show that a $\tau$-accurate label can be computed in time $\log(\tau/(\gamma \eps))^{O(\log^2 d)}$. The construction uses two steps. In the first step, we show that it is possible to construct a labeling that is very little accurate (specifically, $\tau = O(\gamma)$) by splitting coordinates based on the median length of each side, recursively solve half-dimensional contractions, and combine the answers. Such labelings are correct only for faces for which $x^*_i$ is extremely close; in other words, they detect proximity to the fixed point only on a \emph{thin} inset of the search box. We call such a labeling of accuracy $O(\gamma)$ a \emph{thin} labeling. To amplify the accuracy of the labeling, in the second step we use a \emph{boosting} procedure (based on classic AdaBoost). In particular, we query several thin labelings and perform majority voting to construct final labels.

\section{Constructing Thin Labelings}\label{sec:thin}

As the starting point for the entire construction, we will show that minimally-accurate labelings can be constructed efficiently. In particular, we will construct a labeler that, given a (possibly degenerate; see also \cref{rem:degen}) box $[L, U]$ and any $\eta > 0$, either:
\begin{itemize}
    \item returns a point $p \in [0,1]^d$ (not necessarily in the box) that is an $\eta$-approximate fixed point; or
    \item returns a $\frac{\gamma\eta}{8}$-accurate labeling for $[L,U]$.
\end{itemize}

The algorithm sorts the coordinates $i=1,\dots, d$ by increasing side length $U_i - L_i$. Letting $m$ denote the \emph{median} side length, it then splits the dimensions between those that are narrow and those that are wide, that is,
\[
    N \coloneqq \{i : U_i - L_i < m\}, \qquad \quad W \coloneqq \{i: U_i - L_i > m\}.
\]
Let $q\coloneqq(L+U)/2$ be the center of the box.
Recall from \Cref{sec:preliminaries} that $f_N$ denotes the restriction of the output of $f$
to the coordinates in $N$, and similarly for $f_W$.
The algorithm recursively invokes $\Solve$ at precision $\gamma\eta/2$
on the maps $y\mapsto f_N(y,q_{-N})$ and
$y\mapsto f_W(q_{-W},y)$.
Both maps are $(1-\gamma)$-contractions on their unit cubes,
since fixing input coordinates and restricting output coordinates
preserve the contraction property.
Each recursive problem has dimension at most $d/2$.

\begin{algorithm}[t]
\caption{$\ThinLabel(L,U; \eta)$: constructing thin labels}
\label{algo:thin}
\KwIn{A possibly degenerate box $[L,U]$, and a tolerance $\eta>0$.}
\KwOut{A $\frac{\gamma\eta}{8}$-accurate labeling for $[L,U]$, or an $\eta$-approximate fixed point $p \in [0,1]^d$.}
$w\gets (U-L)/2$\tcp*[r]{Half side lengths}
$t\leftarrow\text{median of the } \{w_i\}_{i=1}^d$\;
$N\leftarrow\{i\in\{1,\dots,d\}:w_i<t\}$\tcp*[r]{``Narrow'' dimensions}
$W\leftarrow\{i\in\{1,\dots,d\}:w_i>t\}$\tcp*[r]{``Wide'' dimensions}
\BlankLine
\tcp{Part I: Construct candidate point $x$ by recursion}
$q\gets(L+U)/2$\tcp*[r]{Center of box}
$y^\text{narr} \in \mathbb{R}^N \gets \Solve(y \mapsto f_N(y, q_{-N}), \gamma\eta/2)$\nllabel{line:narrow}\tcp*[r]{Fix wide, solve for narrow coords.}
$y^\text{wide} \in \mathbb{R}^W \gets \Solve(y \mapsto f_W(q_{-W}, y), \gamma\eta/2)$\tcp*[r]{Fix narrow, solve for wide coords.}
Define $x$ as $\begin{cases}
    \clip_{[U_i - t, L_i + t]}(y^\text{narr}_i) & i \in N \;\\
    \clip_{[L_i + t, U_i - t]}(y^\text{wide}_i) & i \in W \;\\
    q_i & \text{otherwise}
\end{cases}$\nllabel{line:clip}\;
\BlankLine
\tcp{Part II: Inspect $x$}
$h\leftarrow f(x)$\;
\If{$\norm{x-h}\le\eta$}{\textbf{return} $\eta$-approximate fixed-point $\clip_{[0,1]^d}(x)$\nllabel{line:defeat}\;}
\Else{
\ForEach{$i=1,\dots,d$}{
    $\ell_{i,-} \gets \mathsf{hi} \text{ if } h_i - L_i \ge t, \text { else } \mathsf{lo}$\nllabel{line:thin-lo}\;
    $\ell_{i,+} \gets \mathsf{hi} \text{ if } U_i - h_i \ge t, \text { else } \mathsf{lo}$\nllabel{line:thin-hi}\;
}
\KwRet labels $(\ell_{-}, \ell_{+})$\;
}
\end{algorithm}

\begin{lemma}
For any possibly degenerate box $[L,U]$ and $\eta>0$,
$\ThinLabel(L,U;\eta)$ (\Cref{algo:thin}) returns either an $\eta$-approximate
fixed point in $[0,1]^d$ or a $\frac{\gamma\eta}{8}$-accurate
labeling of $[L,U]$.
\end{lemma}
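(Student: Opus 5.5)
The proof has two parts, matching the two return branches of \Cref{algo:thin}. In the first branch, $\norm{x-h}\le\eta$. The extension of $f$ from \Cref{sec:preliminaries} gives $f(\clip_{[0,1]^d}(x))=f(x)=h\in[0,1]^d$. Since clipping to $[0,1]^d$ is $\ell_\infty$-nonexpansive and fixes $h$, we get $\norm{\clip_{[0,1]^d}(x)-f(\clip_{[0,1]^d}(x))}\le\norm{x-h}\le\eta$, so \cref{line:defeat} is correct. The remaining work is the second branch, where $\norm{x-h}>\eta$ and we must show that the labels of \cref{line:thin-lo,line:thin-hi} are $\tau$-accurate with $\tau\coloneqq\gamma\eta/8$.

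First I will record the shape of $x$ relative to the threshold $t$. For $i\in N$ we have $U_i-L_i<2t$, so the clip interval $[U_i-t,L_i+t]$ is nonempty and both box slacks $x_i-L_i$ and $U_i-x_i$ are at most $t$. For $i\in W$ both are at least $t$. For median coordinates both equal $t$. Next, set $\delta\coloneqq\norm{x-x^*}$. \Cref{lem:triv bounds} gives $\delta\ge\eta/(2-\gamma)\ge\eta/2$, so $\gamma\delta\ge 4\tau$. Contraction gives $|h_i-x_i^*|\le(1-\gamma)\delta$ for all $i$. Since $\sigma_{i,-}=(h_i-L_i)+(x^*_i-h_i)$ and similarly for the $+$ faces, a face labelled $\mathsf{lo}$ has $\sigma_j<t+(1-\gamma)\delta$, and a face labelled $\mathsf{hi}$ has $\sigma_j\ge t-(1-\gamma)\delta$. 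Hence $\tau$-accuracy follows from the two inequalities
\[
\sigma_{\max}\ge t+\delta-3\tau,\qquad \sigma_{\min}\le t-\delta+3\tau .
\]
The slack $3\tau$ is available because $(1-\gamma)\delta+\tau\le\delta-3\tau$.

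The main obstacle is proving these two inequalities, and this is where the recursive calls are used. Take a coordinate $k$ with $|x_k-x^*_k|=\delta$. If $k\in W$ or $k$ is a median coordinate and $x^*_k>x_k$, then $\sigma_{k,-}\ge t+\delta$. If instead $x^*_k<x_k$, then $\sigma_{k,+}\ge t+\delta$. Symmetrically, if $k\in N$, the face in the direction $x^*_k$ deviates from $x_k$ has slack at most $t-\delta$. So one of the two inequalities always holds from $k$ alone. For the other, I plan to argue by contradiction using the subproblem guarantees, treating the case where the first inequality is known and the second is assumed to fail; the reverse case is symmetric with the roles of $N$ and $W$ exchanged. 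Assuming $\sigma_{\min}>t-\delta+3\tau$, every narrow coordinate has $|x^*_i-q_i|$ small enough that the clip in \cref{line:clip} cannot separate $x_i$ from $y^{\mathrm{narr}}_i$ by more than the room the box leaves. Then, on the unclipped narrow coordinates, $y^{\mathrm{narr}}$ being a $\gamma\eta/2$-near fixed point of $y\mapsto f_N(y,q_{-N})$ should force $|h_i-x_i|$ to be at most about $\gamma\eta/2$ plus the effect of replacing $q_{-N}$ by $x_{-N}$. The wide coordinates are handled the same way using $y^{\mathrm{wide}}$. Combining these with $\norm{x-h}>\eta$ should yield a coordinate witnessing the missing inequality.

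I expect this bookkeeping to be delicate. The change from $q_{-N}$ to $x_{-N}$ in the wide and median coordinates is where the split at the median $t$ must be used: on $W$ the box slacks of $x$ are at least $t$, and on $N$ they are at most $t$. My first attempt will be to compare $x^*$, $x$ and $h$ face by face, using only $1$-Lipschitzness across blocks and the $(1-\gamma)$ factor once. If that gives too weak a constant, I will re-optimize the choice of $3\tau$ against the factor $\gamma\eta/8$.
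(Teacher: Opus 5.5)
Your first branch, the reduction of $\frac{\gamma\eta}{8}$-accuracy to the two inequalities $\sigma_{\max}\ge t+\delta-3\tau$ and $\sigma_{\min}\le t-\delta+3\tau$, and your observation that a maximizing coordinate $k$ gives one of them for free are all correct. The paper proves the same two inequalities, with no $3\tau$ loss. The gap is the remaining inequality, which is the entire content of the lemma, and the route you sketch for it does not work.

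Bounding $|h_i-x_i|$ coordinatewise through $y^{\mathrm{narr}}$ requires controlling $\norm{x_N-y^{\mathrm{narr}}}$ and $\norm{x_{-N}-q_{-N}}$, and neither is small. A wide coordinate $x_j$ is a clipped $y^{\mathrm{wide}}_j$ and may sit anywhere in $[L_j+t,U_j-t]$, arbitrarily far from $q_j$; indeed $x^*$ itself can be far from $q$ in wide coordinates. A narrow $x_i$ can be up to $t-w_i$ away from $q_i$, so the same obstruction hits the wide block through $q_{-W}$. The median split gives no control over these cross-block substitutions, so aiming at $\norm{x-h}\le\eta$ is the wrong target.

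Note also the case you single out: $k\in W$ with the $\sigma_{\max}$ inequality known. There your contradiction hypothesis gives $|x^*_i-q_i|<\delta-3\tau$ for every non-wide $i$. That is information about the frozen inputs of the \emph{wide} subproblem, not the narrow one you reach for.

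The missing idea is to compare the subproblem's solution with $x^*$ rather than with $h$ or $x$, and to do so only at coordinate $k$. Let $y^*$ be the fixed point of $y\mapsto f_W(q_{-W},y)$. From $y^*=f_W(q_{-W},y^*)$, $x^*_W=f_W(x^*)$ and contractivity,
$\norm{y^*-x^*_W}\le(1-\gamma)\max\{\norm{y^*-x^*_W},\norm{q_{-W}-x^*_{-W}}\}$.
The maximum cannot be attained by the first term, so
$\norm{y^{\mathrm{wide}}-x^*_W}\le(1-\gamma)\norm{q_{-W}-x^*_{-W}}+\gamma\eta/2$.
There are now two cases.
\begin{itemize}
\item If $|y^{\mathrm{wide}}_k-x^*_k|\ge\delta$, then together with $\delta>\eta/2$ this forces $\norm{q_{-W}-x^*_{-W}}>\delta$. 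Hence some non-wide $j$, with $w_j\le t$, has a slack below $w_j-\delta\le t-\delta$.
\item If $|y^{\mathrm{wide}}_k-x^*_k|<|x_k-x^*_k|$, then clipping moved $y^{\mathrm{wide}}_k$ away from $x^*_k$. That is only possible if $y^{\mathrm{wide}}_k$ and $x^*_k$ lie beyond the same endpoint of $[L_k+t,U_k-t]$. So $x_k$ is that endpoint, and the corresponding slack at coordinate $k$ is exactly $t-\delta$.
\end{itemize}
The narrow case is symmetric, using $y^{\mathrm{narr}}$ and non-narrow witnesses with $w_j\ge t$. This gives $\sigma_{\min}\le t-\delta$ and $\sigma_{\max}\ge t+\delta$ directly. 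No estimate of $\norm{x-h}$ beyond $\delta>\eta/2$ is needed, and no re-optimization of constants is required.
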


\begin{proof}
    If $\|x-f(x)\|_\infty\leq\eta$, the algorithm returns
    $p\coloneqq\operatorname{clip}_{[0,1]^d}(x)$.
    By the definition of the clipped extension,
    $f(p)=f(x)\in[0,1]^d$.
    Since clipping is nonexpansive, we have
    \[
    \|p-f(p)\|_\infty
    =\|\operatorname{clip}_{[0,1]^d}(x)
    -\operatorname{clip}_{[0,1]^d}(f(x))\|_\infty
    \leq\|x-f(x)\|_\infty\leq\eta.
    \]
    Thus, $p$ is an $\eta$-approximate fixed point.

    So, consider the case $\|x - f(x)\|_\infty > \eta$, which implies $\|x - x^*\|_\infty > \eta/2$ by \cref{lem:triv bounds}.    
    The crux of the proof is to show that when that is the case, $\sigma_{\min}$ and $\sigma_{\max}$ are sufficiently separated from the median half-side $t$, specifically
    \begin{equation}\label{eq:thin crux}
        \sigma_{\min} \le t - \|x - x^*\|_\infty \qquad \text{and}\qquad \sigma_{\max} \ge t + \|x - x^*\|_\infty.
    \end{equation}
    If we \emph{estimate} the distances from $x^*$ to the faces by pretending that $h \coloneqq f(x) \approx x^*$, i.e.,
    \[
        \hat{\sigma}_{i,-} \coloneqq h_i - L_i, \qquad \hat{\sigma}_{i,+} \coloneqq U_i - h_i,
    \]
    the errors $|\hat{\sigma}_j - \sigma_j|$ would be off by at most $\|h - x^*\|_\infty \le (1-\gamma)\|x - x^*\|_\infty$. In light of \eqref{eq:thin crux}, this error is less than the distance between the $t$ and the extrema $\sigma_{\min}, \sigma_{\max}$, meaning that values of $\sigma_j$ very close to $\sigma_{\min}$ and $\sigma_{\max}$ will still land to the left and right of $t$ even in the unlucky case that their estimates achieve the maximum error. In particular,
    \begin{align*}
        \sigma_j - \sigma_{\min} \le \frac{\gamma\eta}{8} \implies \hat\sigma_j &\le \sigma_{\min} + \frac{\gamma\eta}{8} + (1-\gamma)\|x - x^*\|_\infty\\
        &\le t - \gamma\|x - x^*\|_\infty + \frac{\gamma\eta}{8}< t,
    \end{align*}
    where we used \eqref{eq:thin crux} in the second inequality, and the fact that $\|x-x^*\|_\infty > \eta/2$ by assumption in the third.
    Symmetrically,
    \[
    \sigma_{\max} - \sigma_j  \le \frac{\gamma\eta}{8} \implies \hat\sigma_j > t.
    \]
    This shows that the assignments in \texttt{Part II} (\cref{line:thin-lo,line:thin-hi}) satisfy the contract in \eqref{eq:labeling}.

    It remains to show that \eqref{eq:thin crux} holds. Throughout the rest of the proof, \ul{let $i$ be the coordinate for which $|x_i - x^*_i| = \|x - x^*\|_\infty$}. We distinguish three cases:
    \begin{enumerate}
        \item If $i$ is neither narrow nor wide, then the $i$-th side length is exactly $2t$ and $x_i$ is the midpoint of $[L_i, U_i]$. Hence, the fixed point $x^*$ sits at distances $t \pm \|x - x^*\|_\infty$ from the faces $L_i, U_i$, and \eqref{eq:thin crux} follows.
        \item If $i$ is narrow, the clipping on \cref{line:clip} applies, and in particular $x_i \in [U_i - t, L_i + t]$. Hence, $x_i$ is at distance \emph{at most} $t$ from both $L_i$ and $U_i$. In moving from $x_i$ to $x_i^*$, at least one of the slacks must decrease by $\|x-x^*\|_\infty$, showing that 
        \[\sigma_{\min} \le t - \|x-x^*\|_\infty.\]
        To prove the remaining direction, $\sigma_{\max} \ge t + \|x-x^*\|_\infty$, we will argue about the existence of some face $j$ such that $\sigma_j \ge t + \|x-x^*\|_\infty$. This is where the fact that $y^\text{narr}$ is defined as the (approximate) solution of a smaller-dimension fixed-point problem (\cref{line:narrow}) becomes crucial.
        \begin{enumerate}[label={\roman*)}]
            \item As a first case, suppose $|y^\text{narr}_{i} - x^*_i| \ge |x_i - x^*_i| = \|x - x^*\|_\infty$. Let $y^*_{i}$ be defined as the fixed point of the restriction $f_N(\cdot, q_{-N})$ where the non-narrow coordinates are set to the box center. Since $y^*$ is a fixed point, we must have
            \begin{align*}
                \|y^* -x_N^*\|_\infty=\max_{i \in N} |y^*_i - x^*_i| &= 
                \max_{i \in N} |f_{N,i}(y^*, q_{-N}) - f_i(x^*)| \\
                &\le 
                \|
                    f_N(y^*, q_{-N}) - f_N(x^*)
                \|_\infty \\
                &\le (1-\gamma) \max\{\|y^* -x_N^*\|_\infty, \|q_{-N} - x^*_{-N}\|_\infty\},
            \end{align*}
            where the second equality used the definition of $y^*$ as a fixed point of $f_N(\cdot, q_{-N})$, and the last inequality used the fact that $f$ is a $(1-\gamma)$-contraction. Crucially, the maximum on the right cannot be $\|y^* -x_N^*\|_\infty$ as if $\|y^* -x_N^*\|_\infty > \|q_{-N} - x^*_{-N}\|_\infty \ge 0$, then we would have $\|y^* -x_N^*\|_\infty \le (1-\gamma) \|y^* -x_N^*\|_\infty$, impossible. Hence, the error $\|y^* -x_N^*\|_\infty$ is bounded by
            \[
                \|y^* -x_N^*\|_\infty \le (1-\gamma)\|q_{-N} - x^*_{-N}\|_\infty.
            \]
            Accounting for the fact that $y^\text{narr}$ is an $\gamma\eta/2$-near fixed point, we obtain
            \[
                \|y^\text{narr} -x_N^*\|_\infty \le (1-\gamma)\|q_{-N} - x^*_{-N}\|_\infty +\frac{\gamma\eta}{2}.
            \]

            Since $|y^\text{narr}_{i}  - x^*_i| \ge \|x - x^*\|_\infty > \eta/2$ by assumption, the error bound in the previous line implies $\|q_{-N} - x^*_{-N}\|_\infty > \|x - x^*\|_\infty$.\footnote{Or else, we would have the contradiction 
                \[
                     \|x - x^*\|_\infty \le \|y^\text{narr} -x_N^*\|_\infty \le  (1-\gamma)\|x - x^*\|_\infty + \frac{\gamma\eta}{2} \quad\implies\quad \|x-x^*\|_\infty \le \frac{\eta}{2}.
                \]
            }
            But then, there must exist a non-narrow coordinate $j$ (i.e., $w_j \ge t$) for which 
            \[
                |q_j - x^*_j| > \|x - x^*\|_\infty.
            \]
            Hence, at least one of the slacks $x_j^*-L_j$ and $U_j-x_j^*$
            must exceed $w_j+\|x-x^*\|_\infty$.
            Since $w_j\geq t$, we conclude that
            $\sigma_{\max}\geq\sigma_j>t+\|x-x^*\|_\infty$,
            as we wanted to show.

            \item Conversely, assume $|y^\text{narr}_{i} - x^*_i| < |x_i - x^*_i|$, that is, clipping $y^\text{narr}_{i}$ onto $[U_i - t, L_i + t]$ on \cref{line:clip} has \emph{increased} the distance to $x^*_i$. This can only happen if $y^\text{narr}_{i}$ and $x_i^*$ are outside of the interval, on the same side. 

            \vspace{2mm}
            If $y^\text{narr}_{i}, x_i^* \le U_i - t$, the clipping will force $x_i = U_i - t$. Since $x_i^*$ needs to be on the left, this means that the slack of the face at $U_i$ is
            \[
                \sigma_{i,+} = U_i - x_i^* = U_i - x_i + |x_i - x_i^*| = t + \|x - x^*\|_\infty.
            \]
            The case $y^\text{narr}_{i}, x_i^* \ge L_i + t$ is symmetric, and shows that $\sigma_{i,-} \ge t + \|x - x^*\|_\infty$.
        \end{enumerate}
         In both cases, $\sigma_{\max} \ge t + \|x - x^*\|_\infty$, concluding the analysis.
        \item Symmetrically, if $i$ is wide, the clipping on \cref{line:clip} ensures that $x_i$ is at distance at \emph{at least} $t$ from both $L_i$ and $U_i$, showing that
        \[\sigma_{\max} \ge t + \|x-x^*\|_\infty.\]
        To show that $\sigma_{\min} \le t - \|x-x^*\|_\infty$, we can follow the same steps as in the narrow case, this time seeking a face $j$ such that $\sigma_j \le t - \|x-x^*\|_\infty$.\qedhere
    \end{enumerate}
\end{proof}

\section{Amplifying Thin Labelings}

The labeler in \cref{sec:thin} guarantees returning correct labels $\{\mathsf{lo},\mathsf{hi}\}$ only for those faces whose slacks from the fixed point $x^*$ are within distance $\gamma\eta$ from the minimum and maximum slack. Unfortunately, $\gamma\eta$ is extremely small when $\gamma$ is. In this section we can show that we can amplify the accuracy of the labelings from $\gamma\eta$ to constant paying a price that is quasipolynomial in the \emph{logarithm} of $1/\gamma\eta$. To do so, we will use an idea similar to AdaBoost~\cite{freund1997decision,freund1996experiments}.

The technique we use is general and does not rely in any way on $f$ or any geometric property of the problem. Hence, to fix ideas and simplify the treatment, we suggest that the reader lets go of any geometric aspect, and abstract the question as follows.

\begin{problem}[Extremal amplification game]
\label{prob:ampli}
Consider the following game. An adversarial $\Oracle$ holds a secret vector $\sigma \in \mathbb{R}^n$ of size $n$. We can interact with the $\Oracle$ by proposing a vector of shifts $(b_i)$ for each of the coordinates. Given the vector $b$, $\Oracle(b)$ can either:
\begin{enumerate}
    \item declare defeat; or
    \item return a labeling $\{-1,+1\}^n$ that must tag with $-1$ all the coordinates $i$ such that $\sigma_i + b_i$ is within distance $\rho$ of the minimum of $\sigma+b$, and as $+1$ all those within distance $\rho$ of the maximum of $\sigma+b$; all other coordinates can be labeled either way.
\end{enumerate} 
Our objective in the game is to have the $\Oracle$ declare defeat, or be able to extract a labeling for $\sigma$ that is $\kappa$-accurate, that is, tags as $-1$ those coordinates that are within $\kappa$ of the minimum of $\sigma$, and as $+1$ those that are within $\kappa$ of the maximum of $\sigma$.
\end{problem}

\begin{remark}\label{rem:remap}
\cref{prob:ampli} remaps directly onto the thin labeler of \cref{sec:thin}. The vector $\sigma$ is precisely the vector of face slacks $\sigma_{j,+}$ and $\sigma_{j,-}$ across the $n=2d$ faces of the (possibly degenerate) box $[L,U]$. Similarly, shifting the values of the face slacks $\sigma_{j,\pm}$ is achieved by adjusting the faces $L, U$ passed to $\ThinLabel$ by the $b$.\footnote{Specifically, if the coordinate $i$ corresponds to face $(j,+)$, $\sigma_i + b_i$ can be induced by moving $U_{j}$ up by $b_i$. If coordinate $i$ corresponds to face $(j,-)$, $\sigma_i + b_i$ is induced by moving $L_j$ down by $b_i$.}
The labels $-1$ and $+1$ of \cref{prob:ampli} are precisely $\mathsf{lo}$ and $\mathsf{hi}$ returned by \cref{algo:thin}; we use a numerical identifier since we will be keeping vote tallies to perform boosting, and thus we will find it convenient to refer to sums of labels. The threshold $\gamma\eta/8$ guaranteed by \cref{algo:thin} corresponds to $\rho$, while $\kappa$ is the desired threshold $D/8$ required by \eqref{eq:label main}. Finally, ``defeat'' corresponds to the case in which \cref{algo:thin} returns an $\eta$-approximate fixed point (\cref{line:defeat}). 
\end{remark}

\paragraph{A naive attempt.} A correct (but too slow) approach to solve the labeling amplification game would be the following.\footnote{We ignore the possibility of the oracle admitting defeat. If at any point the oracle admits defeat, the game is immediately over in our favor.} First, query $\Oracle(0)$. This call must return labels such that all elements in the range $[\sigma_{\min}, \sigma_{\min} + \rho]$ get a $-1$, and all those in the range $[\sigma_{\max}-\rho,\sigma_{\max}]$ get a $+1$. Hence, we can now shift each $i$ that got a $-1$ label \emph{up} by the amount $b_i = \rho$. Given the new labeling, we can symmetrically shift each $i$ that got a $+1$ label \emph{down} by $b_i = \rho$. It is easy to see that by the end, the majority voting of the three labelings returned by $\Oracle$ is a valid $2\rho$-accurate labeling. We could then keep iterating the idea until accuracy $\kappa$ is reached. Such an approach would then solve the game using a number of calls proportional to $\kappa / \rho$, which is too slow for our purposes as $\rho \le \gamma$.

\paragraph{Main result.}
Instead, the main result we present in this section is a boosting approach for solving the extremal amplification game that depends quasi-polynomially on the \emph{logarithm} of the ratio of accuracies. In particular, we will seek to establish the following guarantee.

\begin{theorem}\label{thm:amplify cost}
For any $n\ge2$ and $\kappa \ge 4\rho > 0$, there exists a deterministic procedure
$\Amplify(0,2\kappa,0,n,n)$ that solves the extremal amplification game (\cref{prob:ampli}) and returns either an admission of defeat
from $\Oracle$, or a $\kappa$-accurate labeling for $\sigma$.
It performs at most
\[
    Q \coloneqq \log(\kappa/\rho)^{C \log n}
\]
calls to $\Oracle$ and $O(nQ)$ additional arithmetic operations,\footref{fn:bits} where $C$ is a universal constant.
\end{theorem}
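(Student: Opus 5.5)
The plan is to prove \cref{thm:amplify cost} by a recursion on the number of \emph{active} coordinates. I interpret the five arguments of $\Amplify(a,w,\beta,S,k)$ as follows: a window $[a,a+w]$ of width $w$ that is promised to contain the relative position of the extremes of the current shifted vector; a base shift vector $\beta$; a set $S$ of still-undecided coordinates; and a size bound $k\ge|S|$. The top-level call $\Amplify(0,2\kappa,0,n,n)$ starts with every coordinate active and window width $2\kappa$. Coordinates outside $S$ are neutralized by giving them huge positive or negative shifts. A coordinate we have already decided is ``low'' is pushed to $-M$, and one decided ``high'' is pushed to $+M$. This keeps the global minimum and maximum of $\sigma+b$ fixed to known coordinates, so the oracle's answer on the remaining coordinates is constrained only relative to our controlled shifts. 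By \cref{rem:remap} and \cref{rem:degen}, arbitrary (even huge) shifts are legal, because degenerate boxes are allowed. The invariant I maintain is that every oracle label returned inside the call is correct for all coordinates of $S$ whose shifted slack lies within $\rho$ of an extreme. The return value is either ``defeat'' or a labeling of $S$ that is $w/2$-accurate with respect to the unshifted $\sigma$.

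Inside one call I would run a boosting / multiplicative-weights loop of $T=O(\log(w/\rho))$ rounds. In each round, every active coordinate $i$ carries a running vote tally $v_i\in\mathbb{Z}$ of past labels, and I set its shift to $b_i=\beta_i-\rho\,v_i$. Thus, as in the naive attempt, coordinates voted low are lifted and coordinates voted high are lowered, but now by an amount proportional to their accumulated margin. The analysis would follow AdaBoost's potential argument. Consider the potential $\sum_i \exp(\pm\lambda(\sigma_i+b_i))$ over coordinates in the bottom or top $w/2$ band. Each oracle call is a weak learner with edge $\rho$ on whichever coordinates currently realize the shifted extreme. I would show the potential drops by a constant factor per round unless the spread of the shifted vector on undecided coordinates has shrunk below $w/2$. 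After $T$ rounds the sign of the tally $v_i$ is then correct for every $i$ in the $\kappa$-bands, except possibly on a residual set whose shifted values are compressed into a window of width $w/2$.

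The residual set is where recursion enters. I split it in half by ranking the shifted values $\sigma_i+b_i$ as estimated from the tallies: the lower half is treated as a candidate-minimum group and the upper half as a candidate-maximum group. On each half I recurse with $k\to k/2$, inheriting the current $\beta$ and a window re-centered to width $w$. The recursive calls effectively operate at one finer scale, but the shifts can re-scale the window, so the accuracy ratio stays $\le\log(\kappa/\rho)$ levels. This yields the recurrence
\[
    Q(k)\;\le\;O\!\left(\log\tfrac{\kappa}{\rho}\right)\cdot\left(1+2\,Q(k/2)\right),\qquad Q(1)=O(1),
\]
which solves to $\log(\kappa/\rho)^{C\log n}$. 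Bookkeeping the tallies and shifts gives $O(n)$ arithmetic operations per oracle call, hence $O(nQ)$ in total. The base case $|S|\le1$ is immediate, and ``defeat'' is propagated upward immediately whenever the oracle declares it.

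The main obstacle is the correctness of the combination step: showing that labels produced by the recursive calls on the two halves, each computed under shifts that make the other half invisible, are still correct for the \emph{global} $\kappa$-bands of the unshifted $\sigma$. The difficulty is that the true minimum might lie in the half we assigned to the maximum group. I would first try to rule this out with the boosting potential, showing that any coordinate misplaced by the split has a tally margin of at least $T\rho/2$ against it, and then pick $T$ so that this margin exceeds $2\kappa$. If that argument fails, the fallback is to recurse on both halves for both the min-role and the max-role. That costs a factor of $4$ instead of $2$ per level, which still gives a bound of the same form $\log(\kappa/\rho)^{O(\log n)}$.
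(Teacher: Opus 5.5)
There is a genuine gap at the core of your argument: the base $\Oracle$ is not a weak learner in the AdaBoost sense, so the per-round progress you claim does not hold.

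\textbf{Why the boosting loop stalls.} The oracle constrains only the coordinates within $\rho$ of the minimum and maximum of $\sigma+b$; every other label is adversarial. With shifts $b_i=\beta_i-\rho v_i$, each coordinate moves by exactly $\rho$ per round. After $T=O(\log(w/\rho))$ rounds, nothing has moved by more than $T\rho=O(\rho\log(w/\rho))\ll w$. This has three consequences:
\begin{itemize}
\item The spread cannot shrink from $w$ to $w/2$. That needs $\Omega(w/\rho)$ rounds, which is just the naive attempt.
\item Any coordinate of the $\kappa$-bands lying farther than about $2T\rho$ from both extremes never becomes extremal, so its tally carries no information.
\item Your potential need not drop at all. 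Take one coordinate at $H'$ and $n-1\ge2$ coordinates just below $H'-\rho$, all labelled $-1$ and lifted; then $\sum_i e^{\lambda(\sigma_i+b_i)}$ increases.
\end{itemize}
The same obstruction defeats your repair of the combination step: a margin $T\rho/2\ge2\kappa$ forces $T=\Omega(\kappa/\rho)$. The four-way fallback changes only the cost, not correctness.

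\textbf{Neutralization is backwards.} Pushing decided coordinates to $\pm M$ makes \emph{them} the extremes of $\sigma+b$. The oracle's guarantee then covers only those coordinates and says nothing about the active set.

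\textbf{The missing idea: recurse on the tolerance, not on the coordinates.} In $\Amplify(b,\tau,F,B_+,B_-)$ the arguments are a shift, a tolerance, a vote tally, and upper bounds on the potentials $\Phi_\pm(F)$. So the top-level call starts from $\Phi_\pm(0)\le n$, not from a set of active coordinates.
\begin{itemize}
\item \emph{Weak learner.} In each boosting round it is a recursive call $\Amplify(b+\delta,\tau/2,F,B_+,B_-)$. Its correctness on its own extremal band comes from the inductive \cref{contract:main}.
\item \emph{Edge.} This is manufactured by Steps I--II: $2\times64$ pushes of size $\tau/32$, each driven by a recursive call at tolerance $\tau/32$ with one budget reduced to $\frac34$ of its value. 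Either every push is \emph{light}, so the maximum drops by $2\tau$ and Step III labels all of $R_+$ directly. Or the last \emph{heavy} push certifies that more than $\frac34B_+$ of the weight lies within $\frac{9\tau}{32}<\frac\tau2$ of the maximum. In that case the misclassified weight satisfies $E_+<\frac14B_+$, which is exactly the condition \eqref{eq:boosting relaxed} needs.
\item \emph{Where $\log n$ comes from.} It is not from halving $n$, but from the budget product $B_+B_-$, which starts at $n^2$ and shrinks by $\frac34$. This gives the recurrence $A(P,\tau)\le1+A(P,\tau/2)+129A(\frac34P,\tau)$, hence $130^k\binom{h+k}{k}$ boosting rounds with $k=O(\log n)$ and $h=O(\log(\kappa/\rho))$.
\end{itemize}
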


As we elaborate below, the main idea of the algorithm is based on boosting~\cite{freund1996game,freund1997decision}.

\subsection{Vanilla Boosting}\label{sec:boosting}

The amplification process in our algorithm draws on boosting ideas. To settle notation and keep the presentation as self-contained as possible, we now recall the key idea of the AdaBoost algorithm.

Suppose we have $n$ examples, each with a known label $y_i \in \{-1,+1\}$. An imperfect oracle (typically referred to as a weak learner) receives as input a distribution $p_i$ over the examples and produces a vector of labels $\ell \in \{-1,+1\}^n$ with the property that the total mass of correct predictions is greater than one half; to keep things concrete, let's say at least $3/4$. 

The main statement of boosting is that it is then possible, by calling the oracle a \emph{logarithmic} in $n$ number of times, to recover correct labels for all examples by majority vote on the labels produced by the oracle.

The possibility of boosting is a reflection of the minimax theorem, which can be made constructive via no-regret learning~\cite{freund1996game}. When using multiplicative weights update as the no-regret algorithm, the resulting algorithm is termed AdaBoost~\cite{freund1997decision}. We now present a standalone, direct potential argument for the correctness of the resulting algorithm, avoiding the deeper connection to the minimax theorem. 

For each example $i\in\{1,\dots,n\}$, AdaBoost keeps a vote tally $F_i$ of the proposed label so far, and considers the potential function defined as the sum of exponential weights
\[
    \L(F) \coloneqq \sum_{i: y_i = +1} \sqrt{3}^{-F_i} + \sum_{i:y_i=-1} \sqrt{3}^{F_i}.
\]
In particular, when the potential is $< 1$, $F_i$ must be $<0$ whenever $y_i=-1$, and $F_i > 0$ whenever $y_i=+1$; in other words, the majority vote must correctly label all of the examples. At the beginning, $F_i=0$ for all examples, so the potential starts from $n$; as soon as the potential drops below $1$, the majority vote is accurate.

At each iteration, AdaBoost queries the imperfect oracle on the distribution
\[
    p_i = \begin{cases}
        \frac{\sqrt{3}^{-F_i}}{\L(F)} & \text{if } y_i = +1\\
        \frac{\sqrt{3}^{F_i}}{\L(F)} & \text{if } y_i = -1,
    \end{cases}
\]
where the probabilities are assigned in proportion to the contribution of each example to $\L(F)$. Let $E$ be the total exponential weight of all examples that the imperfect oracle predicts incorrectly, 
\[
E \coloneqq \sum_{\substack{i:y_i=+1, \ell_i=-1}}\sqrt3^{-F_i} + \sum_{\substack{i:y_i=-1,\ell_i=+1}}\sqrt3^{F_i}.
\]
Then, all of those will increase by a factor $\sqrt{3}$, whereas the other examples will see their exponential weight go down by the factor $1/\sqrt 3$. Hence, letting $F' \coloneqq F + \ell$ denote the updated tallies, the potential satisfies
\begin{equation}\label{eq:boosting}
    \L(F') = \frac{1}{\sqrt 3}(\L (F) - E) + \sqrt{3} E = \frac{1}{\sqrt{3}} \L(F) + \left(\sqrt{3} - \frac{1}{\sqrt 3}\right) E.
\end{equation}
Crucially, given the way the $p_i$ are defined, the total mass of the misclassified examples is exactly $E / \L(F)$, and is at most $1/4$ due to the guarantee of the imperfect oracle. Hence, substituting $E \le \frac{1}{4} \L(F)$ we obtain
\[
    \L(F') \le \left[\frac{1}{\sqrt{3}} + \frac{1}{4}\left(\sqrt{3} - \frac{1}{\sqrt 3}\right)\right]\L(F) = \frac{\sqrt 3}{2} \L(F).
\]
Since $\frac{\sqrt 3}{2} < 1$, this means that within $O(\log n)$ rounds the potential drops below 1, and thus the majority vote correctly labels all examples.%

At its core, our algorithm follows a potential analysis inspired by the above, with a few twists. A major source of complexity arises however from the fact that the labels $y_i$ are not available in our case. Instead, we leverage the following observation, which follows directly from \eqref{eq:boosting}: provided an upper bound $B$ on the potential $\L(F)$ is known, then
\begin{equation}\label{eq:boosting relaxed}
    \L(F) \le B, \quad E \le B/4 \qquad\implies\qquad \L(F') \le \frac{B}{\sqrt{3}} + \left(\sqrt3 - \frac{1}{\sqrt3}\right)\frac{B}4 = \frac{\sqrt3}{2} B.
\end{equation}
Hence, as long as we can prove that the wrong predictions account for at most one quarter of (the bound on) the potential, it is safe to reduce the bound by a constant fraction and continue the algorithm.

\subsection{Algorithm}\label{sec:ampli}

Fix a shift $b$ and a desired tolerance $\tau$.
Our algorithm, $\Amplify$ (\Cref{algo:amplify}), runs two boosting processes---one for coordinates near the maximum and one for coordinates near the minimum of $\sigma + b$---to amplify the labeling accuracy for $\sigma + b$ up to $\tau$. To manufacture the condition \eqref{eq:boosting relaxed} that enables boosting, we construct a recursive algorithm with the following property:
\begin{itemize}
    \item The algorithm is guaranteed to correctly label coordinates directly when the exponential weight of the examples at the extrema is \emph{low};
    \item If the previous bullet point fails, the exponential weight of the coordinates close to the extrema must be large, and in that case a \emph{single} recursive call to $\Oracle$ (at half accuracy) correctly classifies a large portion of the exponential weight, guaranteeing the condition $E \le B/4$ required in \eqref{eq:boosting relaxed}.
\end{itemize}

In other words, our algorithm gives guarantees \emph{conditionally} on the (bound on) exponential weight of examples near the extrema. To make this quantitative, introduce the symbols\footnote{To ease the notational burden, we avoid explicitly associating the symbols with the particular choice of $b$ and $\tau$. The specific $b$ and $\tau$ inducing the $\sigma', H', L',$ and bands is either generic, or clear from context.}
\begin{equation}\label{eq:sigma H L}
    \sigma'_i \coloneqq \sigma_i + b_i,\qquad H' \coloneqq \max_i \sigma'_i,\qquad L' \coloneqq \min_i \sigma'_i,
\end{equation}
we introduce the extremal bands
\begin{align*}
    R_+ &\coloneqq \{i : \sigma'_i > H' - \tau\},&  S_+ &\coloneqq \{i : \sigma'_i > H' - 8\tau\},\\
    R_- &\coloneqq \{i : \sigma'_i < L' + \tau\},& S_- &\coloneqq \{i : \sigma'_i < L' + 8\tau\},
\end{align*}
and the corresponding exponential potentials as a function of the votes $F_i$ on each coordinate
\[
    \Phi_+(F) \coloneqq \sum_{i \in S_+} \sqrt{3}^{-F_i}, \hspace{2.5cm} \Phi_-(F) \coloneqq \sum_{i \in S_-} \sqrt{3}^{F_i}.
\]
The sets $R_+$ and $R_-$ are exactly the coordinates that must be correctly labeled in the extremal amplification game. However, we define the exponential weight potentials on the larger sets $S_+, S_-$, to account for the shifts applied in the recursive calls.

\begin{algorithm}[!h]
\SetKwInput{KwConvention}{Convention}
\caption{$\Amplify(b,\tau, F, B_+, B_-)$}
\label{algo:amplify}
\KwIn{Shift vector $b$, tolerance $\tau$, current vote tally $F$, potential bounds $B_\pm>0$ for $\Phi_\pm(F)$\!\!\!\!\!\!\!\!}
\KwOut{A defeat from $\Oracle$, or a $\tau$-accurate labeling consistent with \cref{contract:main}}
\KwConvention{If any call returns an admission of defeat from $\Oracle$, immediately propagate that up the call tree}
\BlankLine
\lIf*{$\tau \le \rho$}{\KwRet $\Oracle(b)$\nllabel{line:base-oracle}\tcp*[r]{Base case}}
\lIf*{$B_+B_- < 1$}{\KwRet labels according to \eqref{eq:decoding}\nllabel{line:base-decode}\;}
$\ell_i \gets \bot$ for all $i=1,\dots, n$\tcp*[r]{Output labels unset}
\While(\hfill\tcp*[h]{Perform boosting iterations}\nllabel{line:while}){$B_+ B_- \ge 1$\nllabel{line:boost-loop}}{
    $\delta \gets 0$\nllabel{line:boost-init}\;
    \tcp{Step I: Push up coordinates with label -1}
    \ForEach{$t=1,\dots,64$\nllabel{line:step1-loop}}{
        $\ell' \gets \Amplify(b + \delta, \frac{\tau}{32}, F, B_+, \frac34 B_-)$\nllabel{line:step1-call}\;\vspace{2mm}
        $\delta_i \gets \delta_i + \frac{\tau}{32} \mathbf{1}[\ell'_i = -1] $ for all $i = 1,\dots, n$\nllabel{line:shift-up}
    }
    \tcp{Step II: Push down coordinates with label +1}
    \ForEach{$t=1,\dots,64$}{
        $\ell' \gets \Amplify(b + \delta, \frac{\tau}{32}, F, \frac34 B_+, B_-)$\nllabel{line:step2-call}\;\vspace{2mm}
        $\delta_i \gets \delta_i - \frac{\tau}{32} \mathbf{1}[\ell'_i = +1] $ for all $i = 1,\dots, n$\nllabel{line:shift-down}
    }
    \BlankLine
    \tcp{Step III: Set some labels directly if possible}
    \ForEach{$i=1,\dots,n$\nllabel{line:direct-labels}}{
        \lIf{$\ell_i = \bot$ and $\delta_i \ge \tau$}{$\ell_i \gets -1$\nllabel{line:label-minus}}
        \lIf{$\ell_i = \bot$ and $\delta_i \le -\tau$}{$\ell_i \gets +1$\nllabel{line:label-plus}}
    }
    \BlankLine
    \tcp{Step IV: Recurse *once*, at half precision; update vote tallies}
    $\ell' \gets \Amplify(b + \delta, \frac{\tau}{2}, F, B_+, B_-)$\nllabel{line:weak-learner}\tcp*[r]{Votes from weak learner}
    \ForEach{$i=1,\dots,n$}{
        $F_i \gets F_i + \ell'_i$\nllabel{line:vote-update}\tcp*[r]{Update vote tally}
    }
    \tcp{Update potential bounds according to \eqref{eq:boosting relaxed}}
    $B_+ \gets \frac{\sqrt 3}{2}B_+$,\quad
    $B_- \gets \frac{\sqrt 3}{2}B_-$\nllabel{line:potential-update}\;
}
Decode unset ($\bot$) labels via \eqref{eq:decoding} and return all labels $\ell_i$\;
\end{algorithm}

We consider and maintain upper bounds $B_+$ and $B_-$ on the two potentials,
satisfying $\Phi_+(F)\leq B_+$ and $\Phi_-(F)\leq B_-$, respectively. As with AdaBoost, when $B_+$ and $B_-$ fall under $1$, the signs of $F$ provide correct labels. In fact, a more relaxed termination condition, looking at the \emph{product} $B_+ B_-$ rather than each individual bound, can be derived.
\begin{lemma}\label{lem:decoding}
    Assume that $B_+ B_- < 1$, and consider the decoding rule for labels
    \begin{equation}\label{eq:decoding}
        \ell_i \coloneqq \begin{cases}
            +1 & \text{if } 3^{F_i} \ge B_- / B_+ ; \\
            -1 & \text{otherwise}.
        \end{cases}
    \end{equation}
    If $\Phi_+(F) \le B_+$, the labels assign $+1$ to all $i \in S_+$; similarly, if $\Phi_-(F) \le B_-$, the labels assign $-1$ to all $i \in S_-$.
\end{lemma}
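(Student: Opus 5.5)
The plan is to read off, from each potential bound, a per-coordinate bound on $F_i$, and then check that the threshold $3^{F_i}\ge B_-/B_+$ sits strictly between these per-coordinate bounds. The two claims are symmetric. Each uses only one of the two potential hypotheses together with $B_+B_-<1$.

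For the first claim, assume $\Phi_+(F)\le B_+$ and fix $i\in S_+$. Every term of $\Phi_+(F)$ is nonnegative, so the single term for $i$ is at most the whole sum: $\sqrt3^{-F_i}\le B_+$. Squaring gives $3^{-F_i}\le B_+^2$, i.e.\ $3^{F_i}\ge 1/B_+^2$. Next, $B_+B_-<1$ with $B_+>0$ gives $B_-<1/B_+$. Hence $B_-/B_+<1/B_+^2\le 3^{F_i}$, and the decoding rule \eqref{eq:decoding} assigns $\ell_i=+1$.

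For the second claim, assume $\Phi_-(F)\le B_-$ and fix $i\in S_-$. The same single-term argument gives $\sqrt3^{F_i}\le B_-$, so $3^{F_i}\le B_-^2$. From $B_+B_-<1$ we get $B_-<1/B_+$, and multiplying by $B_->0$ gives $B_-^2<B_-/B_+$. Therefore $3^{F_i}<B_-/B_+$, the condition $3^{F_i}\ge B_-/B_+$ fails, and the rule outputs $\ell_i=-1$.

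There is no real obstacle here. The only points needing care are the direction of the strict inequalities, which must make the threshold separate the two cases with no tie, and the positivity of $B_\pm$, which the input convention of \Cref{algo:amplify} guarantees. Note also that the two claims are independent: each uses only its own potential hypothesis plus $B_+B_-<1$.
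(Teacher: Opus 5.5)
Your proof is correct and matches the paper's argument: you bound the single term $\sqrt3^{\mp F_i}$ by the potential, square, and use $B_+B_-<1$ with $B_\pm>0$ to get $B_-^2 < B_-/B_+ < B_+^{-2}$. You also make explicit the nonnegativity of the terms and the positivity of $B_\pm$, which the paper uses without stating.
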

\begin{proof}
    Expanding the definition of $\Phi_+(F)$ in the hypothesis $\Phi_+(F) \le B_+$, we must have, for all $i \in S_+$,
    \[
        \sqrt{3}^{-F_i} \le B_+ \implies 3^{-F_i} \le B_+^2 \implies 3^{F_i} \ge B_+^{-2} > B_- / B_+.
    \]
    Similarly, for all $i \in S_-$,
    \[
        \sqrt{3}^{F_i} \le B_- \implies 3^{F_i} \le B_-^2 < B_- / B_+.
    \]
\end{proof}

Each call to $\Amplify$ supplies as state the shift vector $b$, the desired $\tau$, the current tally of votes $F$, and bounds $B_+, B_-$ on the potentials $\Phi_+$ and $\Phi_-$. We give the pseudocode in \cref{algo:amplify}. The relaxed termination condition $B_+ B_- < 1$ for the boosting is exactly what is used in \cref{line:base-decode,line:while}.

\subsection{Correctness}
We prove the correctness of $\Amplify$ by relying on the following inductive contract.
\begin{contract}\label{contract:main}
If the quantity $B_\pm$ indeed bounds the potential $\Phi_\pm$, then the call to $\Amplify(b, \tau, F, B_+, B_-)$ returns valid labels $\ell_i$ for the coordinates in $R_\pm$. More specifically,
\begin{align*}
    \Phi_+(F) \le B_+ &\implies \ell_i = +1 \quad\forall i \in R_+;\\
    \Phi_-(F) \le B_- &\implies \ell_i = -1 \quad\forall i \in R_-.
\end{align*}
\end{contract}

In particular, we will seek to establish the following key result about \cref{algo:amplify}.
\begin{theorem}\label{thm:amplify}
For any shift $b$, margin $\tau>0$, votes $F$, and positive
budgets $B_+,B_-$, the procedure
$\Amplify(b,\tau,F,B_+,B_-)$ terminates and returns either
an admission of defeat from $\Oracle$, or a labeling satisfying \cref{contract:main}.
\end{theorem}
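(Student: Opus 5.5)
The proof is a double induction. The outer induction is on the recursion level $k \coloneqq \lceil \log_2(\tau/\rho)\rceil$. Every recursive call has tolerance $\tau/2$ or $\tau/32$, so it has strictly smaller $k$, and $k\le 0$ is the base case. The inner induction is on the iterations of the \textbf{while} loop at \cref{line:boost-loop}.

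\emph{Termination.} Each iteration of the loop performs $129$ recursive calls, which terminate by the outer induction hypothesis. It then multiplies $B_+B_-$ by $3/4$ (\cref{line:potential-update}). Since $B_+B_->0$ initially, the loop exits after $O(1+\log(B_+B_-))$ iterations.

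\emph{Base cases.} If $\tau\le\rho$, \cref{line:base-oracle} returns $\Oracle(b)$. Because $R_\pm$ at tolerance $\tau$ is contained in the $\rho$-bands around the extrema of $\sigma+b$, the oracle's guarantee gives the contract directly. If $B_+B_-<1$, \cref{line:base-decode} applies \cref{lem:decoding}, which yields the contract because $R_\pm\subseteq S_\pm$.

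\emph{Main case.} I would fix the case $\Phi_-(F)\le B_-$; the $+$ side is symmetric. Fix one loop iteration, with the current $F$ and $B_-$. The goal is to show two things. First, every $i\in R_-$ (defined for the original shift $b$ and tolerance $\tau$) that receives a direct label in Step~III gets $-1$. Second, if some $i\in R_-$ is still unlabeled, then the Step~IV vote update satisfies $E\le B_-/4$ on $S_-$, so that by \eqref{eq:boosting relaxed} the invariant $\Phi_-(F')\le\frac{\sqrt3}{2}B_-$ holds for the updated tally. Iterating, either every $i\in R_-$ is eventually labeled $-1$ directly, or the invariant survives to loop exit and \cref{lem:decoding} labels the rest correctly.

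The core of this is a dichotomy on Step~I. Consider the shifted vectors $\sigma+b+\delta$ during the $64$ calls at tolerance $\tau/32$ with budget $\frac34B_-$.

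\textbf{(a)} Suppose that at every call, the potential on the shifted band $S_-$ of width $8\cdot\tau/32=\tau/4$ is at most $\frac34B_-$. Then each call labels its shifted $R_-$ band correctly by the induction hypothesis. Those coordinates are pushed up by $\tau/32$, so the shifted minimum rises by $\tau/32$ per round. After $64$ rounds, every $i\in R_-$ has accumulated $\delta_i\ge\tau$: it is pushed every time until its shifted value exceeds the lower band, which requires a total lift of about $2\tau$. It is therefore labeled $-1$ in Step~III. I must also check that no coordinate which ought to be $+1$ (those in $R_+$) ever reaches $\delta_i\ge\tau$. The argument is that coordinates near the maximum cannot enter the lower band unless $H'-L'=O(\tau)$, in which case both labels are allowed anyway. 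I would make this precise using Step~II, which never pushes a coordinate below the running minimum band.

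\textbf{(b)} Otherwise, at some round more than $\frac34B_-$ of the exponential weight lies within $\tau/4$ of the shifted minimum. Since all shifts are at most $2\tau$, these coordinates sit inside the original $S_-$ (width $8\tau$). After the remaining pushes and Step~II, they also sit within the $R_-$ band of width $\tau/2$ for the shift $b+\delta$ used in Step~IV. Invoking the contract for the Step~IV call, with $\Phi_-(F)\le B_-$ still valid, shows these coordinates all receive $-1$. So the misclassified weight is $E\le B_-/4$, which is exactly the hypothesis of \eqref{eq:boosting relaxed}.

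I expect the main obstacle to be the bookkeeping in (a) and (b): the bands $R_\pm,S_\pm$ move with $\delta$, and the constants $8$, $32$ and $64$ must make all the inclusions hold simultaneously. Concretely, I need a lemma bounding $|\delta_i|\le 2\tau$ together with a monotonicity statement: a coordinate in the shifted lower band stays within a fixed distance of the shifted minimum. I would first prove these two facts in isolation and then feed them into (a) and (b).
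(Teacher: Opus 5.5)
Your overall structure matches the paper's: induction on depth, one side at a time, a light/heavy dichotomy on the step whose calls use the reduced budget $\frac34 B_-$, and the Step~IV call as weak learner. Case (b), however, has a real hole. You take \emph{some} heavy round of Step~I and claim its heavy set still lies within $\tau/2$ of the shifted minimum at Step~IV. For an arbitrary heavy round this fails. A heavy round gives no guarantee on the $-1$ labels, so it may push up the heavy coordinates while leaving the minimizer in place, widening their gap to $L''$ by $\tau/32$. Up to $63$ later heavy rounds can then add nearly $2\tau$, and a lemma of the form ``stays within a fixed distance of the shifted minimum'' cannot repair this. The missing idea is to take the \emph{last} heavy round and account exactly:
\begin{itemize}
\item In a light round the contract pushes up everything within $\tau/32$ of $L''$, so $L''$ rises by exactly $\tau/32$ and no gap $\sigma''_i-L''$ grows.
\item A heavy round grows each gap by at most $\tau/32$.
\item In Step~II the calls keep budget $B_-$, so by $|\delta_i|\le 2\tau$ and the potential comparison the contract applies. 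Hence nothing within $\tau/32$ of $L''$ is pushed down, $L''$ is unchanged, and gaps do not grow.
\end{itemize}
This yields gaps at most $\tau/4+\tau/32<\tau/2$ at Step~IV.

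Second, your ``first goal'' (no $i\in R_-$ is labeled $+1$ on \cref{line:label-plus}) is never actually proved, and case (b) needs it. The check you substitute, that no $i\in R_+$ reaches $\delta_i\ge\tau$, is the $+$-side statement, and its justification is false. \cref{contract:main} has no exemption when $H'-L'=O(\tau)$: if $\Phi_+(F)\le B_+$, every $i\in R_+$ must get $+1$ even if it also lies in $R_-$. (When both bounds hold and the bands overlap, the argument just shows that some call returned defeat.) The right argument is monotonicity of the extremum, using precisely the Step~II fact you mention. Since $\delta$ only increases in Step~I and $L''$ is unchanged in Step~II, $L''\ge L'$ at Step~III, so $\delta_i\ge L''-\sigma'_i\ge L'-\sigma'_i>-\tau$ for $i\in R_-$. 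Symmetrically, $H''$ never increases, which handles $R_+$. The same invariance of $L''$ through Step~II is also what keeps $\delta_i>\tau$ from case (a) intact until Step~III, a point your (a) skips.
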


To solve the extremal amplification game from \Cref{prob:ampli}, it therefore suffices to call $\Amplify$ with no shift, desired accuracy $\kappa$, and no vote tally. In that case, the exponential weight potentials are guaranteed to be $B_+ = B_- = n$, leading to the following direct corollary.

\begin{corollary}
    A solution to the extremal amplification game from \Cref{prob:ampli} is obtained by
    $\Amplify(0,2\kappa,0,n,n)$.\footnote{The use of $2\tau$ instead of $\tau$ is due to the fact that $R_+$ and $R_-$ are defined via strict inequalities, while the notion of accuracy is defined in terms of closed intervals.}
    In particular, for a (possibly degenerate) box $[L,U]$ in
    dimension $d$, set $n=2d$ and implement $\Oracle$ using
    $\ThinLabel$ with $\eta=\gamma\eps$, so that
    $\rho=\gamma\eta/8=\gamma^2\eps/8$.
    
    In particular, $\Amplify(0,2\tau,0,n,n)$ with $n=2d$ yields either an
    $\eps$-near fixed point or a $\tau$-accurate labeling
    of $[L,U]$, giving the algorithm
    $\Label(L,U;\tau,\eps)$ in \cref{algo:label}.
\end{corollary}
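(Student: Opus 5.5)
The plan is to derive the corollary from \cref{thm:amplify}, instantiated at the root call $\Amplify(0,2\kappa,0,n,n)$, and then to translate the abstract game back to boxes using \cref{rem:remap}.

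\textbf{Step 1: the initial budgets are valid.} With $b=0$ and $F=0$, every exponential weight $\sqrt3^{\pm F_i}$ equals $1$. Hence $\Phi_+(0)=|S_+|\le n$ and $\Phi_-(0)=|S_-|\le n$. So the hypotheses $\Phi_\pm(F)\le B_\pm$ of \cref{contract:main} hold with $B_+=B_-=n$. By \cref{thm:amplify}, the call terminates and either propagates a defeat from $\Oracle$ or returns labels with $\ell_i=+1$ on $R_+$ and $\ell_i=-1$ on $R_-$. Here the bands are computed with $\sigma'=\sigma$ and tolerance $2\kappa$.

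\textbf{Step 2: strict versus closed bands.} The sets $R_\pm$ are defined by strict inequalities, whereas $\kappa$-accuracy uses closed ones. Since $\kappa<2\kappa$, we have $\sigma_i\ge H'-\kappa \Rightarrow \sigma_i>H'-2\kappa$, so $\{i:\sigma_i\ge\max\sigma-\kappa\}\subseteq R_+$. Symmetrically, $\{i:\sigma_i\le\min\sigma+\kappa\}\subseteq R_-$. The returned labels are therefore $\kappa$-accurate for $\sigma$, which solves \cref{prob:ampli}. If no $\kappa$-accurate labeling exists, the two implications of \cref{contract:main} cannot both hold. Then \cref{thm:amplify} forces the defeat branch, so this case needs no separate treatment.

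\textbf{Step 3: instantiation with $\ThinLabel$.} Take $n=2d$ and identify the coordinates of $\sigma$ with the face slacks of $[L,U]$. Each call $\Oracle(b)$ is implemented as $\ThinLabel(L',U';\gamma\eps)$ on the shifted box. For a face $(j,+)$ we set $U'_j=U_j+b_i$, and for a face $(j,-)$ we set $L'_j=L_j-b_i$, so that the slacks of $[L',U']$ are exactly $\sigma+b$. The shifted box may be degenerate, which $\ThinLabel$ and \cref{def:labeling} both permit (\cref{rem:degen}).

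By the lemma on $\ThinLabel$, each call returns one of two things:
\begin{itemize}
\item A $\gamma\eta/8=\gamma^2\eps/8=\rho$-accurate labeling of the shifted box. Reading $\mathsf{lo}\mapsto-1$ and $\mathsf{hi}\mapsto+1$, this is exactly a legal $\Oracle$ answer.
\item A $\gamma\eps$-approximate fixed point, which plays the role of defeat. By \cref{lem:triv bounds} such a point is $\eps$-near, since $\|p-x^*\|_\infty\le\|p-f(p)\|_\infty/\gamma\le\eps$.
\end{itemize}
Running $\Amplify(0,2\tau,0,2d,2d)$ and applying Steps 1–2 with $\kappa=\tau$ therefore returns either an $\eps$-near fixed point or a $\tau$-accurate labeling of $[L,U]$. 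This is the specification of $\Label$.

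\textbf{Main obstacle.} The step needing the most care is checking that the shift semantics match: every recursive oracle query made inside $\Amplify$ at shift $b+\delta$ must be a genuine thin-labeling query on a well-defined (possibly degenerate) box. The sign convention is that raising $\sigma_{j,-}$ means lowering $L_j$. Once this correspondence and the label-to-sign mapping are fixed, the remaining steps are direct applications of the earlier results. The cost bound is separate: it is the content of \cref{thm:amplify cost}, and the hypothesis $\kappa\ge4\rho$ is only needed there, not for correctness.
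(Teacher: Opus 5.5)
Your proposal is correct and follows essentially the same route as the paper. Like the paper, you apply \cref{thm:amplify} at the root call, using $\Phi_\pm(0)=|S_\pm|\le n$ to justify the budgets $B_\pm=n$, and use tolerance $2\kappa$ to pass from the strict bands $R_\pm$ to closed $\kappa$-accuracy. You then implement $\Oracle$ by $\ThinLabel$ on the shifted box with $\eta=\gamma\eps$, and convert a defeat into an $\eps$-near point via \cref{lem:triv bounds}. The differences are only in detail: you spell out the budget check and the face-shift sign conventions, which the paper leaves implicit.
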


\begin{algorithm}[!htp!]
\caption{$\Label(L, U; \tau, \eps)$}
\label{algo:label}
    \KwIn{Possibly degenerate box $[L, U]$, desired labeling accuracy $\tau$ and nearness threshold $\eps$}
    \KwOut{Either an $\eps$-near fixed point of $f$, or a $\tau$-accurate labeling for $[L,U]$}
    Have $\Oracle(b)$ be provided by $\ThinLabel$ (\cref{rem:remap}) with desired precision $\eta = \gamma\eps$\;
    Set $n \leftarrow 2d$\;
    Call $\Amplify(0, 2\tau, 0, n, n)$\;
    \If{the call results in the oracle admitting defeat (i.e., returns a $\gamma\eps$-approximate fixed point $x$)}{
        \KwRet $x$\tcp*[r]{This is an $\eps$-near fixed point by \cref{lem:triv bounds}}
    }
    \Else{
        \KwRet the returned $\tau$-accurate labeling\tcp*[r]{Label $-1$ is $\mathsf{lo}$, label $+1$ is $\mathsf{hi}$}
    }
\end{algorithm}

We devote the rest of the subsection to proving \cref{thm:amplify}.
We prove the theorem by induction on the recursion depth.
The induction is well-founded since $\rho>0$ and every recursive
call reduces $\tau$ by at least a factor of two.
Moreover, each boosting iteration multiplies $B_+B_-$ by $3/4$,
so the boosting loop terminates after finitely many iterations.
We analyze the two sides of the contract separately.
For each side, every boosting iteration either correctly assigns
all required labels, which then remain fixed, or preserves the
corresponding potential bound for the next iteration.
The base cases on \cref{line:base-oracle,line:base-decode} are immediate. Furthermore, if at any point any recursive call returns an admission of defeat from $\Oracle$, it is automatically propagated. Hence, we focus on the case in which all recursive calls return labelings.
We prove the guarantee for the $+1$ labels (i.e., $\Phi_+(F) \le B_+ \implies \ell_i = +1$ for all $i \in R_+$); the analysis for the $-1$ labels is symmetric. The main idea of the proof is captured by the following key lemma (a symmetric result holds for the $-1$ labels and is proved identically).
\begin{lemma}\label{lem:progress}
    Each iteration of the boosting loop (\cref{line:boost-loop} in \Cref{algo:amplify}) that starts with $\Phi_+(F) \le B_+$ either:
    \begin{enumerate}[label={\normalfont(\roman*)}]
        \item sets $+1$ labels to \emph{all} coordinates in $R_+$ via \cref{line:label-plus}; or
        \item guarantees that the sum of exponential weights of the misclassified $+1$ labels (i.e., labels in $S_+$ that were voted $-1$ instead of the correct $+1$ on \cref{line:weak-learner}) is less than $1/4$ of $\Phi_+(F)$. Thus, the inductive contract of boosting \eqref{eq:boosting relaxed} holds, and the updated bound $B_+ \gets \frac{\sqrt 3}{2} B_+$ remains a valid upper bound for $\Phi_+(F)$ at the next round. 
    \end{enumerate}
\end{lemma}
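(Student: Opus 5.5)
The plan is to fix one boosting iteration that starts with $\Phi_+(F)\le B_+$, where the bands $R_+,S_+$ are taken with respect to $\sigma'=\sigma+b$ at scale $\tau$, and to split on whether the $\frac34 B_+$ budgets passed into Step II are valid. All recursive calls inside the iteration use shifts $b+\delta$ with $\norm{\delta}\le 2\tau$, since each of the $128$ inner calls moves a coordinate by at most $\tau/32$. Throughout, I use the induction hypothesis (\cref{contract:main} for calls of smaller $\tau$) only on its $+$ side. A preliminary bookkeeping step makes the induction hypothesis applicable to the $+$ side of every inner call.

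\textbf{Band containment.} For an inner call at tolerance $\tau/32$ its band $S_+$ is $\{i:\sigma_i+b_i+\delta_i> H_\delta-\tau/4\}$. For the Step IV call at tolerance $\tau/2$ it is $\{i:\sigma_i+b_i+\delta_i>H_\delta-4\tau\}$. I will show that each of these sets is contained in the outer band $S_+=\{\sigma'_i>H'-8\tau\}$. A shift of size at most $2\tau$ moves each value and the maximum by at most $2\tau$, so an element within $4\tau$ of the shifted max is within $4\tau+4\tau=8\tau$ of $H'$. Hence the potential of any inner call is bounded by $\Phi_+(F)\le B_+$. In particular the $+$ side of the contract holds for all Step I calls (budget $B_+$) and for the Step IV call. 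The $+$ side of a Step II call is guaranteed only if its own potential is at most $\frac34B_+$.

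\textbf{Case (i): every Step II call has potential at most $\frac34 B_+$.} Then each Step II call labels $+1$ every coordinate within $\tau/32$ of the current shifted maximum, and these coordinates are pushed down by $\tau/32$. I run the water-filling argument from the ``naive attempt''. Unpushed coordinates already lie at most at $M-\tau/32$, where $M$ is the current maximum, and pushed ones drop to at most $M-\tau/32$. So the running maximum decreases by at least $\tau/32$ per round, hence by $2\tau$ over the $64$ rounds. Step I must be shown not to spoil this. Its calls satisfy the $+$ contract, so they never push up coordinates near the shifted maximum. I expect this yields that, at the start of Step II, every $i\in R_+$ is still within about $\tau$ of the maximum with $\delta_i\le 0$; this needs care but should follow from the same containment. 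Now suppose some $i\in R_+$ ended with $\delta_i>-\tau$. Its final value would exceed $H'-2\tau$, which is at least the final maximum, a contradiction. Hence $\delta_i\le-\tau$ for all $i\in R_+$, and Step III assigns $+1$ to all of $R_+$, giving (i). Labels already set in earlier iterations are fixed, so they only help.

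\textbf{Case (ii): some Step II call, say round $t$, has potential exceeding $\frac34B_+$.} Then the coordinates in that call's band $S^{(t)}_+$ (within $\tau/4$ of the shifted max at round $t$) carry weight $>\frac34B_+$. The key step, which I expect to be the main obstacle, is to show that $S^{(t)}_+\subseteq R_+^{\mathrm{IV}}$, the band within $\tau/2$ of the final maximum of $\sigma'+\delta$ used by Step IV. The subsequent Step II rounds, whose contracts may fail, can move things by at most $(64-t)\tau/32$, so the naive bound is too weak. I would try to use the fact that pushes are uniform by $\tau/32$ and applied only to labelled coordinates. I would also use monotonicity: the top cluster moves together, and the round $t'$ at which the contract first fails is the relevant one, with every earlier round behaving as in case (i). If this holds, the Step IV call labels $+1$ everything in $S^{(t)}_+$. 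The misclassified $+$-weight is then at most $\Phi_+(F)-\frac34B_+\le B_+/4$, and \eqref{eq:boosting relaxed} shows that $\frac{\sqrt3}{2}B_+$ bounds the new potential, giving (ii).
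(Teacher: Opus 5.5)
Your skeleton is the same as the paper's: band containment, the light/heavy split on Step II, and case (i) via the maximum dropping by $2\tau$. The gap is in case (ii). You leave its decisive step open, and the route you sketch cannot work. That route anchors on the \emph{first} round whose contract fails. Later heavy rounds carry no guarantee on the $+1$ labels, so they may keep pushing down part of your set $G$ while the maximizer stays put.

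For example, suppose two coordinates of total weight $0.8B_+$ remain on top. Then every round stays heavy, while a third coordinate of $G$ is pushed in all remaining rounds. It ends at gap close to $2\tau$, outside the $\tau/2$-band of the Step IV call, so $G\subseteq R_+^{\mathrm{IV}}$ is false in general.

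The missing idea is to anchor on the \emph{last} heavy round $t^\ast$ and track the gaps $g''_i=H''-\sigma''_i$.
\begin{itemize}
\item A heavy round cannot raise $H''$, since Step II only pushes down. It lowers each coordinate by at most $\tau/32$, so it raises every gap by at most $\tau/32$.
\item A light round lowers $H''$ by exactly $\tau/32$: the maximizer is in the band and is pushed, and every coordinate outside the band already sits at or below $H''-\tau/32$. Since it also lowers each coordinate by at most $\tau/32$, it never raises a gap.
\end{itemize}
Now take the coordinates within $\tau/4$ of the maximum at the start of round $t^\ast$; they carry weight $>\frac34B_+$. When Step IV is called, all of them have gap at most $\tau/4+\tau/32=9\tau/32<\tau/2$. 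That call has budget $B_+$, which is valid by your band containment, so it labels all of them $+1$. Hence $E_+<\Phi_+(F)-\frac34B_+\le\frac14B_+$.

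There are three smaller issues.
\begin{itemize}
\item In case (i), your expected intermediate claim that $\delta_i\le0$ for $i\in R_+$ after Step I is false. A coordinate of $R_+$ lying more than $\tau/32$ below the maximum may be labeled $-1$ and pushed up. Your contradiction argument needs only that Step I never raises the maximum. That holds because coordinates within $\tau/32$ of it are labeled $+1$ (budget $B_+$ is valid), and all others sit at or below $H''-\tau/32$ and rise by at most $\tau/32$.
\item Your remark that labels set in earlier iterations ``only help'' needs justification. In any iteration with a valid bound, light or heavy, $H''\le H'$ throughout Steps I and II. So every $i\in R_+$ has $\delta_i\le H''-\sigma'_i\le H'-\sigma'_i<\tau$ at Step III, and \cref{line:label-minus} never assigns $-1$ inside $R_+$. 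This is what protects $R_+$ during case-(ii) iterations, in which Step III may set some labels.
\item Your justification of $|\delta_i|\le 2\tau$ is off: $128$ moves of $\tau/32$ give $4\tau$, which would break your $8\tau$ containment. The bound holds because Step I only raises and Step II only lowers, each by at most $2\tau$.
\end{itemize}
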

\begin{proof} %
    Consider a generic iteration of boosting  (\crefrange{line:boost-init}{line:weak-learner}), excluding the direct update of vote tallies. The values of $b, \tau,$ and $F$ remain constant, and so do the shifted values $\sigma'_i = \sigma_i+b_i$, $H'=\max_i \sigma'_i$, and exponential weight potential 
    \[
        \Phi_+(F) = \sum_{i: H' - \sigma' < 8\tau} \sqrt{3}^{-F_i} \le B_+.
    \]
    
    Throughout the iteration, the values of $\delta_i$ change. We denote the shifted slacks given the current shift vector $b+\delta$ supplied to the recursive calls on \cref{line:step1-call,line:step2-call,line:weak-learner} as
    \[
        \sigma''_i \coloneqq  \sigma_i + b_i + \delta_i,\qquad H'' \coloneqq \max_i \sigma''_i.
    \]
    Furthermore, we denote with
    \[
        g''_i \coloneqq H'' - \sigma''_i
    \]
    the gap of $\sigma''_i$ from the maximum. 

    \paragraph{A technical observation.} We start with a technical observation that will be necessary for invoking the inductive contract (\cref{contract:main}) in the calls on \cref{line:step1-call,line:step2-call,line:weak-learner}.
    \cref{contract:main} gives guarantees conditional on the exponential weight potential, which is a function of the supplied shifts and tolerance. In all recursive calls, the supplied shifts are $b + \delta$, while the tolerance is $\tau/32$ for \cref{line:step1-call,line:step2-call}, and $\tau/2$ for \cref{line:weak-learner}.

    To connect the inductive contract condition in the recursive call to that of the current call, we therefore have to relate the exponential weight potentials $\Phi^{\delta,\tau''}_+(F)$ for $b + \delta$ at tolerance $\tau''=\tau/32$ and $\tau''=\tau/2$. We claim that 
    \begin{equation}\label{eq:recursive potential}
        \Phi_+^{\delta,\tau'}(F) \le \Phi_+(F).
    \end{equation}
    This is quite convenient, as it means that a recursive call with unmodified bound $B_+$ (as is the case on \cref{line:step1-call,line:weak-learner}) automatically satisfies the precondition of the inductive contract.

    To prove \eqref{eq:recursive potential}, expand the definition
    \[
        \Phi_+^{\delta,\tau''}(F) = \sum_{i: H'' - \sigma''_i < 8 \tau''} \sqrt{3}^{-F_i}.
    \]
    The proof is complete if we can show that $\{i: H'' - \sigma''_i < 8\cdot \tau/2\} \subseteq \{i : H' - \sigma' < 8\tau\}$. By direct inspection of \cref{algo:amplify}, it is clear that every $\delta_i$ must be in the range $[-2\tau, 2\tau]$. Hence, $H' \le H'' + 2\tau, \sigma' \ge \sigma'' - 2\tau$ and thus $H' - \sigma' \le H'' - \sigma'' + 4\tau$, completing the proof.
    
    \paragraph{Step I.} During \texttt{Step I} (\crefrange{line:step1-loop}{line:shift-up}), the recursive call to $\Amplify$ uses the bound $B_+$.   
    Hence, from \eqref{eq:recursive potential} the precondition of the inductive contract (\cref{contract:main}) holds, and the recursive call labels every coordinate with
    \[
        \sigma''_i > H'' - \frac{\tau}{32}
    \]
    with label $+1$. Hence, none of those coordinates sees an increased $\delta$, while any other coordinate can rise by $\frac{\tau}{32}$ starting from a value no greater than $H'' - \frac{\tau}{32}$. This shows that, for all $i\in\{1,\dots,n\}$,
    \begin{equation}\label{eq:H1}
        \text{the value of $H''$ does not change during \texttt{Step I}, and $g''_i$ never increases.}
    \end{equation}

    \paragraph{Step II.} Consider now a generic iteration in \texttt{Step II}. $\Amplify$ is called recursively with a bound on $\Phi_+$ set to $\frac34 B_+$, and accuracy $\frac{\tau}{32}$. Hence, expanding the definition of the exponential potential from the point of view of that call, \cref{contract:main} guarantees that the $+1$ labels returned by the recursive call when
    \begin{equation}\label{eq:light}
        \sum_{i: H'' - \sigma''_i \le \frac\tau4} \sqrt{3}^{-F_i} \le \frac{3}{4}B_+.
    \end{equation}
    When that is the case, we call the iteration \emph{light}, because it means that the exponential weight at the extremal band $\{i: H'' - \sigma''_i \le \frac\tau4\}$ is at most $3/4$ of the bound, and thus the exponential weight is not too concentrated at the very ends of $\sigma''$. In contrast, we call the iteration \emph{heavy} when \eqref{eq:light} does not hold.

    If the iteration is light, the inductive contract guarantees that all the coordinates within $\frac\tau{32}$ from the maximum are correctly labeled $+1$, and thus they are pushed down by $\frac\tau{32}$. Thus, for all $i$,
    \begin{equation}\label{eq:H2L}
        \text{each light iteration in \texttt{Step II} decreases $H''$ by exactly $\frac\tau{32}$, and $g''_i$ never increase.}
    \end{equation}
    
    If the iteration is heavy, no guarantee is made about the labeling $\ell'$ returned by the recursive call to $\Amplify$ on \cref{line:step2-call}. However, since the $\delta_i$ are only decreased on \cref{line:shift-down},
   \begin{equation}\label{eq:H2H}
        \text{each heavy iteration in \texttt{Step II} cannot increase the maximum $H''$.}
    \end{equation}
    Furthermore, \cref{line:shift-down} only keeps the $\sigma''_i$ unchanged or decreases it by $\frac{\tau}{32}$, for all $i$
    \begin{equation}\label{eq:H2G}
        \text{each heavy iteration in \texttt{Step II} increases $g''_i$ by at most $\frac{\tau}{32}$.}
    \end{equation}

    \paragraph{Step III.} 
    As established in \eqref{eq:H1}, \eqref{eq:H2L}, and \eqref{eq:H2H}, \texttt{Steps I} and \texttt{II} never increase the maximum $H''$. Hence, for any coordinate $i$ that requires a $+1$ label (i.e., $i \in R_+$) at the beginning of \texttt{Step III},
    \[
        \delta_i = \sigma''_i - (\sigma_i + b_i) \le H'' - (\sigma_i + b_i) \le H' - (\sigma_i + b_i) = H' - \sigma'_i < \tau.
    \]
    Hence, no $i \in R_+$ can ever be mislabeled as $-1$ on \cref{line:label-minus}.

    If all 64 loops of \texttt{Step II} were light, the value of $H''$ on \cref{line:direct-labels} would be exactly $H' - 64 \frac{\tau}{32} = H' - 2\tau$. Hence, every $i \in R_+$ would satisfy
    \[
        \delta_i = \sigma''_i - (\sigma_i + b_i) \le H'' - (\sigma_i + b_i) = H' - \sigma'_i - 2\tau < \tau - 2\tau = -\tau
    \]
    and therefore \cref{line:label-plus} correctly labels all coordinates in $R_+$ as $+1$. This shows case (i) in the statement.

    \paragraph{Step IV.}
    Contrapositively, if by the end of \texttt{Step III} at least one label of the coordinates in $R_+$ remains unset ($\bot$), then at least one of the iterations in \texttt{Step II} was heavy. We will then show that the single recursive call on \cref{line:weak-learner} provides a weak learner whose exponential weight mass placed on misclassified labels $\ell'$ is at most $\frac14 B_+$.

    Consider the \emph{last} heavy iteration of \texttt{Step II}. By definition of a heavy iteration, \emph{at the beginning of that iteration}, the opposite of \eqref{eq:light} holds, that is,
    \[
        \sum_{i: H'' - \sigma''_i \le \frac\tau4} \sqrt{3}^{-F_i} > \frac{3}{4}B_+.
    \]
    Let $G$ denote the set $\{i: H'' - \sigma''_i \le \frac\tau4\}$ at the beginning of that heavy iteration.
    In light of \eqref{eq:H2G}, at the end of that iteration, the gap $g''_i = H'' - \sigma''_i$ has increased by at most $\frac{\tau}{32}$. All subsequent iterations are light, and by \eqref{eq:H2L} cannot increase the gap $g''_i$. Hence, just before the call to $\Amplify$ in \texttt{Step IV}, the gaps of all $i \in G$ satisfy
    \[
        g''_i \le \frac{\tau}{4} + \frac{\tau}{32} = \frac{9\tau}{32} < \frac12 \tau.
    \]
    This implies that all $i \in G$ are within the band $R_+$ of the recursive call on \cref{line:weak-learner}. 
    Since that call uses the unscaled budget $B_+$, and $\Phi_+(F) \le B_+$ by hypothesis, the inductive \cref{contract:main} together with \eqref{eq:recursive potential} guarantee that all $i \in G$ get correctly labeled as $+1$. The exponential weight of the labels that $\ell'$ puts as $-1$ is therefore
    \[
        E_+ \coloneqq \sum_{i \in S_+ : \ell'_i = -1} \sqrt{3}^{-F_i} \le \Phi_+(F) - \sum_{i \in G} \sqrt{3}^{-F_i} < \Phi_+(F) - \frac{3}{4}B_+ \le \frac{1}{4}B_+.
    \]
    This completes the proof of case (ii) in the statement.
\end{proof}

With \cref{lem:progress}, proving that \cref{algo:amplify} terminates with correct $+1$ label for $R_+$ when $\Amplify$ is called with $\Phi_+(F) \le B_+$ is straightforward. In the first boosting iteration, $\Phi_+(F) \le B_+$ and we can invoke \cref{lem:progress}. If case (i) materializes, the labels in $R_+$ are set correctly and will never be changed, so we are done. If case (ii) materializes, the recursive call on \cref{line:weak-learner} is a weak oracle. Thus, we can set $\Phi_+(F) \le B_+$ according to \eqref{eq:boosting relaxed}, and guarantee that the next boosting iteration starts with a valid bound $\Phi_+(F) \le B_+$ after the vote tallies are updated on \cref{line:vote-update}. By the end of the boosting round, \Cref{lem:decoding} guarantees that the unset labels of elements in $R_+$, if any, will be correctly set to $+1$.

Combining the above guarantee for $R_+$ with the symmetric guarantee for $R_-$ completes the proof of \cref{thm:amplify}.

\subsection{Complexity of $\Amplify$}
We start by bounding the number of arithmetic operations\footnote{To expose the important ideas without getting bogged down in bit representation considerations, we analyze the complexity assuming a unit-cost real machine. Bounding the bit complexity of the algorithm is straightforward and brings no surprises.\label{fn:bits}} and calls to $\Oracle$ performed by $\Amplify$.
Each recursive call to $\Amplify$ cuts the tolerance $\tau$ by a factor of $2$ or $32$ (\cref{line:weak-learner,line:step1-call,line:step2-call}), so the recursion depth is $O(\log(\tau/\rho))$. Furthermore, each call to $\Amplify$ in \texttt{Step I} and \texttt{Step II} cuts the product $B_+B_-$ by a factor of $3/4$, and so does the bound update on \cref{line:potential-update}.
By combining these observations, we can bound the total number of boosting rounds and recursive calls to $\Amplify$ in terms of the logarithms of the initial budget product $B_+B_-$ and the ratio $\tau/\rho$.

\begin{lemma}[Cost of amplification]\label{lem:amplify cost}
Fix $n\ge2$ and $\Oracle$ accuracy $\rho>0$. Consider the call $\Amplify(b,\tau,F,B_+,B_-)$ with $\tau\ge\rho$ and $B_+B_- \ge 1$, integer votes $F\in\mathbb Z^n$, and positive potential bounds $B_+,B_-$. 
The total number of boosting rounds across the entire computation is at most
\begin{equation}\label{eq:amplify call bound}
    N\coloneqq130^k\binom{h+k}{k}
    \le [130(h+1)]^k,
\end{equation}
where $k = 1 + \lceil\log_{4/3}(B_+B_-)\rceil$ and $h = \lceil\log_2(\tau/\rho)\rceil$.
The number of arithmetic operations is therefore at most $O(Nn)$, and the number of calls to $\Oracle$ is at most $129N$.
\end{lemma}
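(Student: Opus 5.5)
We argue by strong induction on the pair $(h,k)$. Let $T(h,k)$ be the largest possible total number of boosting rounds executed in the whole recursion tree of a call whose parameters give the values $h,k$. Rounds are counted as iterations of the while loop on \cref{line:while}, in the call itself and in all descendants. The goal is a recursion for $T$ whose solution is dominated by $N(h,k)=130^k\binom{h+k}{k}$.

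We first track how the parameters change. Within one round of the parent, with budgets $(B_+,B_-)$ at the start of that round, the algorithm makes three kinds of child calls.
\begin{itemize}
\item The $128$ calls in Steps I and II use tolerance $\tau/32$, so their $h$ decreases by at least $5$. Their budget product is $\tfrac34 B_+B_-$, so their $k$ drops by one relative to the current product.
\item The single Step IV call uses tolerance $\tau/2$, so its $h$ decreases by one. It keeps the same budget product.
\end{itemize}
After each round the product is multiplied by $3/4$. Hence a call with parameter $k$ runs at most $k$ rounds, since $(3/4)^{k-1}B_+B_-<1$ by the choice $k=1+\lceil\log_{4/3}(B_+B_-)\rceil$. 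In round $r$ (for $r=0,\dots,k-1$), the current product has $k$-parameter at most $k-r$.

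Children with $\tau\le\rho$ or product $<1$ hit a base case and run no rounds. It is cleanest to let $T(h,k)=0$ when $h\le 0$ or $k\le 0$, and to check that the parametrization matches the base-case tests up to ceilings. This gives
\[
T(h,k)\le \sum_{r=0}^{k-1}\Bigl[1+128\,T(h-1,k-r-1)+T(h-1,k-r)\Bigr].
\]
Here we used $h-5\le h-1$ together with monotonicity of $T$.

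Now we bound the sum inductively using $N$. The first two terms contribute
\[
128\sum_{j\le k-1}N(h-1,j)\le 128\cdot 130^{k-1}\sum_{j\le k-1}\binom{h-1+j}{j}=128\cdot 130^{k-1}\binom{h+k-1}{k-1}.
\]
This is the hockey-stick identity. The Step IV terms give $\sum_{j\le k}N(h-1,j)$. This sum is dominated by $N(h-1,k)$ plus $\tfrac1{130}$ times terms of the same form, since each lower $j$ costs a factor $130$. So it is at most about $\tfrac{130}{129}\cdot 130^k\binom{h+k-1}{k}$.

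Pascal's rule, $\binom{h+k-1}{k-1}+\binom{h+k-1}{k}=\binom{h+k}{k}$, then absorbs both pieces into $130^k\binom{h+k}{k}$. The coefficient $128/130$ on the first binomial leaves the slack needed for the additive $k$ rounds and the $\tfrac{130}{129}$ factor. The step I expect to be the main obstacle is exactly this constant bookkeeping. If it does not close with the naive split, I would instead strengthen the hypothesis to $T\le N-1$, or prove the sharper intermediate bound $\sum_{j\le k}N(h-1,j)\le\tfrac{130}{129}N(h-1,k)$ directly.

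The inequality $\binom{h+k}{k}\le(h+1)^k$ follows from $\binom{h+k}{k}=\prod_{i=1}^k\frac{h+i}{i}$ and $\frac{h+i}{i}\le h+1$.

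Finally, each round performs $129$ recursive calls. Every call to $\Oracle$ arises from a base-case child on \cref{line:base-oracle}, which is reached only as one of these $129$ calls of some round. The number of oracle calls is therefore at most $129N$. Each round, and each base case, does $O(n)$ arithmetic: vector shifts, labels, tallies and decoding. Charging each base case to its parent round gives $O(Nn)$ operations in total.
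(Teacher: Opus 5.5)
Your unrolled recurrence is sound. In round $r$ the Step I/II children have parameters $(h-5,k-r-1)$, the Step IV child has $(h-1,k-r)$, and at most $k$ rounds run. (The right reason is $(3/4)^k B_+B_-\le 3/4<1$. Your inequality $(3/4)^{k-1}B_+B_-<1$ fails, with equality, whenever $B_+B_-$ is an integer power of $4/3$.)

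The gap is exactly the step you flagged: the closing estimate you describe is false, and neither fallback repairs it. Bounding the whole Step IV sum by $\tfrac{130}{129}N(h-1,k)$ charges an excess of $\tfrac{1}{129}130^k\binom{h+k-1}{k}$ beyond $N(h-1,k)$. The slack left after Pascal's rule is only $2\cdot 130^{k-1}\binom{h+k-1}{k-1}-k$. Since $\binom{h+k-1}{k}/\binom{h+k-1}{k-1}=h/k$, the excess wins once $h$ exceeds roughly $2k$. Already for $k=2$, your chain of inequalities fails to give $T\le N$ for every $h\ge 4$. Strengthening the hypothesis to $T\le N-1$ gains only $O(k)$, while the deficit grows like $h^k$. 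The ``sharper intermediate bound'' $\sum_{j\le k}N(h-1,j)\le\tfrac{130}{129}N(h-1,k)$ is precisely the estimate that loses too much, so proving it directly does not help.

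The loss comes from comparing the lower Step IV terms $N(h-1,j)$, $1\le j\le k-1$, with $\binom{h+k-1}{k}$ via monotonicity in $j$. They are of the same type as the Step I/II terms and belong in the hockey-stick sum. Keep only the $r=0$ Step IV term apart:
\begin{align*}
T(h,k)&\le k+N(h-1,k)+129\sum_{j=0}^{k-1}N(h-1,j)\\
&\le k+130^k\binom{h+k-1}{k}+129\cdot 130^{k-1}\binom{h+k-1}{k-1}.
\end{align*}
Pascal's rule then leaves slack $130^{k-1}\binom{h+k-1}{k-1}-k\ge 0$, which closes the induction.

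This is the paper's argument in unrolled form. The paper does not unroll the loop; it treats the remaining iterations as a call with product $\tfrac34 B_+B_-$ and the same $\tau$. In your parametrization this gives $A(h,k)\le 1+A(h-1,k)+129A(h,k-1)$ after monotonicity in $\tau$, and Pascal's identity closes it in one line with slack $130^{k-1}\binom{h+k-1}{k-1}\ge 1$.

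The rest of your proposal is fine: the bound $\binom{h+k}{k}\le(h+1)^k$, and charging oracle calls and $O(n)$ arithmetic to the $129$ children of each round.
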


\begin{proof}
Let $P \coloneqq B_+B_-$ denote the product of the potential bounds. Each boosting round calls 128 times $\Amplify$ with a reduced product $\nicefrac34 P$ and tolerance $\tau/32$, and once with the unmodified $P$ and tolerance $\tau/2$. Then, each bound $B_+, B_-$ is multiplied by $\sqrt{3}/2$, so the product $P$ decreases by a factor $3/4$ and the loop continues. Denoting with $A(P, \tau)$ the total number of boosting rounds (including in recursive calls) starting from product $P$ and tolerance $\tau$, we can write the recurrence
\begin{align*}
    A(P, \tau) &\le 1 + A(P, \tau/2) + 128\cdot  A(\nicefrac34 P, {\tau}/{32}) + A(\nicefrac34 P, \tau)  \\
     &\le 1 + A(P, \tau/2) + 129\cdot  A(\nicefrac34 P, \tau),
\end{align*}
where in the second line we used the monotonicity of $A(P, \tau)$ in $\tau$. The base cases are $A(P, \tau) = 0$ for $\tau \le \rho$ or $P < 1$. The claimed bound \eqref{eq:amplify call bound} can then be verified by induction using Pascal's identity $\binom{h+k}{k} = \binom{h-1+k}{k} + \binom{h-1+k}{k-1}$.
\end{proof}

\cref{thm:amplify cost} then follows as a direct corollary.

\begin{proof}[Proof of \cref{thm:amplify cost}]
Correctness is immediate in light of \cref{contract:main} and \cref{thm:amplify}, applied to the initial call $\Amplify(0,2\kappa,0,n,n)$.
By \cref{lem:amplify cost}, the number of boosting iterations (and thus, up to a factor 129 also the number of calls to $\Oracle$) satisfies
\[
    Q \le (C_1 (1 + \log(2\kappa/\rho)))^{C_2 \log n}
\]
where $C_1$ and $C_2$ are universal constants. Using the hypothesis that $\kappa / \rho \ge 4$, we can therefore absorb the addition and multiplication by $C_1$ into a larger constant in the exponent, yielding the statement.
\end{proof}

\subsection{Complexity of $\Solve$}\label{sec:main proof}
Finally, we finish bounding the complexity of $\Solve$. As a reminder, $\Solve$ calls $\Label$ a total $\log_{8/7} (1/\eps)$ times. Each call to $\Label$ consists of a call to $\Amplify$ for the oracle provided by $\ThinLabel$. $\ThinLabel$ in turn calls $\Solve$ recursively on at most two problems of half the dimension.  

\begin{repeatthm}{thm:main}
Fix $d\ge2$, a contraction gap $\gamma\in(0,1]$, and a desired nearness tolerance $0<\eps\le1/4$. For a $(1-\gamma)$-contraction $f:[0,1]^d\to[0,1]^d$, the total number of arithmetic operations and exact evaluations of $f$ performed by $\Solve(f,\eps)$ to return an $\eps$-\emph{near} fixed point of $f$ is upper bounded by
\[
\left(\log\frac1{\gamma\eps}\right)^{C \log^2 d},
\]
for a universal constant $C$.
\end{repeatthm}

\begin{proof}
Throughout the proof, we let $C$ denote a universal constant that may change from line to line.

Each call to $\Solve$ with a $d$-dimensional map $f$ and tolerance $\eps$ performs at most $C \log_2(1/\eps)$ calls to $\Label$ with the same $f$ and tolerance at most constant, and a number $Cd \log_2(1/\eps)$ of arithmetic operations. Each call to $\Label$ makes a single call to $\Amplify$ (\cref{algo:label}), with amplification ratio $\kappa/\rho \le 2/(\gamma^2\eps)$. $\Amplify$ calls $\Oracle$ at most one plus $129$ times the number of boosting rounds, which is bounded by \cref{lem:amplify cost}; the number of arithmetic operations in each boosting round is linear in the dimension. Finally, each call to $\Oracle$ is provided by $\ThinLabel$ with accuracy $\eta=\gamma\eps$, which in turn makes at most two recursive calls to $\Solve$ on problems of dimension at most $\lfloor d/2\rfloor$ and tolerance $\gamma\eta/2 = \gamma^2\eps/2$, as well as at most a quadratic number of additional operations to determine the median of the box dimensions (\cref{algo:thin}). Once the dimension of the problem drops to $d=1$, the $\ell_\infty$-contraction is solved by bisection, which takes $O(\log(1/\eps))$ arithmetic operations and evaluations of $f$. In dimension zero, $\Solve$ returns the empty vector.

Each evaluation of a restricted map also takes at most $Cd$ additional operations to form the input and restrict the output. Combining these observations, and taking the worst case over maps of dimension at most $d$, we can write a recurrence for the total number of arithmetic operations, comparisons, and exact evaluations of $f$ performed by $\Solve(f,\eps)$:
\[
\begin{aligned}
    T_d(\eps)&\le Cd^2\log_2(1/\eps)\,U_d({2}/{\gamma^2\eps})
    (d^2+T_{\lfloor d/2\rfloor}(\gamma^2\eps/2)),\\
    T_1(\eps)&\le C\log_2(1/\eps).
\end{aligned}
\]
where $U_d(a)$ bounds one plus the number of boosting rounds.

Let
\[
    r\coloneqq\lceil\log_2d\rceil,\qquad
    \ell\coloneqq2+\log_2\frac1{\gamma\eps},\qquad
    \eps_j\coloneqq\eps\left(\frac{\gamma^2}{2}\right)^j.
\]
After $j$ steps of the recurrence, the dimension is at most $2^{r-j}$ and the tolerance is $\eps_j$. For $j<r$, write $t=r-j$. The parameters in \cref{lem:amplify cost} satisfy $h\le Cr\ell$ and $k\le Ct$. Increasing $k$ to an integer $Ct$ in \eqref{eq:amplify call bound} and using $\binom{h+k}{k}\le[C(1+h/k)]^k$, we obtain
\[
    U_{2^t}\left(\frac{2}{\gamma^2\eps_j}\right)
    \le\left(\frac{Cr\ell}{t}\right)^{Ct}.
\]
Since $\log_2(1/\eps_j)\le Cr\ell$ for $j\le r$, iterating the recurrence gives
\begin{equation}\label{eq:solve-work}
\begin{aligned}
    T_d(\eps)
    &\le C (d^2r\ell)^r r\ell
        \prod_{t=1}^r\left(\frac{Cr\ell}{t}\right)^{Ct}\\
    &\le C 2^{r^2}\ell^{Cr^2}\le\ell^{Cr^2} \le \log(1/(\gamma\eps))^{C\log^2 d},
\end{aligned}
\end{equation}
where we used the fact that $\gamma\eps\le1/4$ to absorb terms into a large exponent constant $C$.
\end{proof}

\section*{Acknowledgments}

The paper was written entirely by humans and we take full responsibility for its content. AI served as an important amplifier as we explored different ideas and designs over several months. The final result is a synthesis between human and artificial ideas.

This work was supported in part by the National Science Foundation awards CCF-2443068 and IIS-2552046, the Office of Naval Research grant N000142512296, and an AI2050 Early Career Fellowship.

\bibliographystyle{alpha}
\bibliography{refs.bib}

\end{document}